\title{Fixation probabilities for any configuration of two strategies on regular graphs}
\author{Yu-Ting Chen$^{1,2,3}$, Alex McAvoy$^{1,4}$, and Martin A. Nowak$^{1,5,6}$}
\address{\small$^{1}$Program for Evolutionary Dynamics, Harvard University, Cambridge, MA 02138 \\ $^{2}$Center of Mathematical Sciences and Applications, Harvard University, Cambridge, MA 02138 \\ $^{3}$Department of Mathematics, University of Tennessee, Knoxville, TN 37996 \\$^{4}$Department of Mathematics, University of British Columbia, 1984 Mathematics Road, Vancouver, BC, Canada V6T 1Z2 \\ $^{5}$Department of Mathematics, Harvard University, Cambridge, MA 02138 \\ $^{6}$Department of Organismic and Evolutionary Biology, Harvard University, Cambridge, MA 02138}
\theoremstyle{definition}
\newtheorem{lemma}{Lemma}
\newtheorem{proposition}{Proposition}
\newtheorem{remark}{Remark}
\newtheorem{theorem}{Theorem}
\newcommand{\T}{\intercal}
\newcommand{\eq}[1]{Eq.~(\ref{eq:#1})}
\newcommand{\fig}[1]{Fig.~\ref{fig:#1}}
\newcommand{\lem}[1]{Lemma~\ref{lem:#1}}
\newcommand{\thm}[1]{Theorem~\ref{thm:#1}}
\begin{document}

\allowdisplaybreaks

\begin{abstract}
Population structure and spatial heterogeneity are integral components of evolutionary dynamics, in general, and of evolution of cooperation, in particular. Structure can promote the emergence of cooperation in some populations and suppress it in others. Here, we provide results for weak selection to favor cooperation on regular graphs for any configuration, meaning any arrangement of cooperators and defectors. Our results extend previous work on fixation probabilities of single, randomly placed mutants. We find that for any configuration cooperation is never favored for birth-death (BD) updating. In contrast, for death-birth (DB) updating, we derive a simple, computationally tractable formula for weak selection to favor cooperation when starting from any configuration containing any number of cooperators and defectors. This formula elucidates two important features: (i) the takeover of cooperation can be enhanced by the strategic placement of cooperators and (ii) adding more cooperators to a configuration can sometimes suppress the evolution of cooperation. These findings give a formal account for how selection acts on all transient states that appear in evolutionary trajectories. They also inform the strategic design of initial states in social networks to maximally promote cooperation. We also derive general results that characterize the interaction of any two strategies, not only cooperation and defection.
\end{abstract}

\maketitle

\section{Introduction}

Mechanisms favoring the emergence of cooperation in social dilemmas have become central focuses of evolutionary game theory in recent years \citep{nowak:Science:2006,brosnan:PTRSB:2010,zaggl:JIE:2013}. The dilemma of cooperation, which is characterized by conflicts of interest between individuals and groups, poses a significant challenge to models of evolution since many of these models predict that cooperation cannot persist in the presence of exploitation by defectors \citep{nowak:Nature:2004,nowak:BP:2006}. Yet cooperation is widely observed in nature, and the spatial assortment that results from population structure is one element that can promote its emergence. In fact, spatial structure is among the most salient determinants of the evolutionary dynamics of a population \citep{nowak:Nature:1992,lindgren:PD:1994,killingback:PRSB:1996,killingback:JTB:1998,brauchli:JTB:1999,szabo:PRE:2000,hutson:MAA:2002,szabo:PRL:2002,durrett:SNY:2002,ifti:JTB:2004,ranta:EP:2005,jansen:Nature:2006,komarova:BMB:2006,perc:PRE:2008,nowak:PTRSB:2009,roca:PLR:2009,fu:JTB:2010,helbing:NJP:2010,szolnoki:EPL:2010,vancleve:TPB:2013,schreiber:TPB:2013,hwang:TE:2013,kaveh:RSOS:2015}.

In social dilemmas, population structure can allow for the emergence of localized cooperative clusters that would normally be outcompeted by defectors in well-mixed populations \citep{nowak:Nature:2004,rand:PNAS:2014}. However, whether population structure promotes or suppresses cooperation depends on a number of factors such as the update rule, the type of social dilemma, and the spatial details of the structure (which determines the extent of local competition \citep[see][]{vancleve:TPB:2013,debarre:NC:2014}). For example, cooperation need not be favored in prisoner's dilemma interactions under all update rules \citep{ohtsuki:Nature:2006,ohtsuki:JTB:2006,tarnita:JTB:2009,allen:EMS:2014}. As a consequence, population structure should be considered in the context of the game and the underlying update rule.

In the donation game, a cooperator ($C$) pays a cost, $c$, to provide the opponent with a benefit, $b$, and a defector ($D$) pays no cost and provides no benefit  \citep{sigmund:PUP:2010}. Provided $b>c>0$, this game represents a prisoner's dilemma since then the unique Nash equilibrium is mutual defection, but both players would prefer the payoff from mutual cooperation \citep{maynardsmith:CUP:1982}. In addition to representing one of the most important social dilemmas, the donation game also admits a simple way in which to quantify the efficiency of cooperation: the benefit-to-cost ratio, $b/c$. As this ratio gets larger, the act of cooperation has a more profound effect on the opponent relative to the cost paid by the cooperator. As we shall see, for any configuration of cooperators and defectors, this ratio is a vital indicator of the evolutionary performance of cooperation.

Evolutionary graph theory is a framework for studying evolution in structured populations \citep{lieberman:Nature:2005,ohtsuki:Nature:2006,ohtsuki:JTB:2006,szabo:PR:2007,taylor:Nature:2007,santos:Nature:2008,szolnoki:PRE:2009,broom:JSTP:2011,broom:JTB:2012,vanveelen:PNAS:2012,shakarian:B:2012,chen:AAP:2013,maciejewski:PLoSCB:2014,debarre:NC:2014}. In a graph-structured population, the players reside on the vertices and the edges indicate who is a neighbor of whom. In fact, there are two types of neighborhoods: (i) those that generate payoffs (``interaction neighborhoods") and (ii) those that are relevant for evolutionary updating (``dispersal neighborhoods"). Thus, an evolutionary graph is actually a pair of graphs consisting of an interaction graph and a dispersal graph \citep{ohtsuki:PRL:2007,taylor:Nature:2007,ohtsuki:JTB:2007,pacheco:PLoSCB:2009}. As in many other studies, we assume that the interaction and dispersal graphs are the same. Other extensions of evolutionary graph theory involve dynamic graphs, which allow the population structure to change during evolutionary updating \citep{antal:PNAS:2009,tarnita:PNAS:2009,wu:PLoSONE:2010,wardil:SR:2014}. Our focus is on static, regular graphs of degree $k$, meaning the population size, $N$, is fixed and each player has exactly $k$ neighbors.

We study two prominent update rules: birth-death (BD) and death-birth (DB). In both processes, players are arranged on a graph and accumulate payoffs by interacting with all of their neighbors. This payoff, $\pi$, is then converted to fitness, $f$, via $f=1+w\pi$, where $w\geqslant 0$ is the intensity of selection \citep{nowak:Nature:2004}. For BD updating  \citep{moran:MPCPS:1958,nowak:Nature:2004}, a player is chosen with probability proportional to fitness for reproduction; the offspring of this player then replaces a random neighbor (who dies). For DB updating \citep{ohtsuki:Nature:2006}, a player is chosen uniformly at random for death; a neighbor of this player then reproduces (with probability proportional to fitness) and the offspring fills the vacancy. For each of these processes, we assume that $w$ is small, which means selection is weak. Weak selection is often a biologically meaningful assumption since an individual might possess many traits (strategies), and each trait makes only a small contribution to fitness \citep{nowak:Nature:2004,wild:JTB:2007,fu:PRE:2009,wu:PRE:2010,akashi:G:2012,wu:PLoSCB:2013,mullon:JEB:2014}.

The effects of selection on fixation probability have been studied chiefly for states with just a single cooperator since, if the mutation rate is small, the process will reach a monomorphic state prior to the appearance of another cooperator through mutation \citep{fudenberg:JET:2006,wu:JMB:2011}. Although small mutation rates are often reasonable from a biological standpoint \citep{otto:ME:1998,bromham:BL:2009,loewe:PTRSB:2010,lynch:TG:2010}, there are several reasons to study arbitrary cooperator configurations. Even when starting from a state with a single cooperator, an evolutionary process typically transitions subsequently through states with many cooperators. From a mathematical standpoint, it is therefore natural to ask how selection affects the fixation probability of cooperators from each possible transient state that might arise in an evolutionary trajectory. Furthermore, many-mutant states could arise through migration \citep{miekisz:LNM:2008,ohtsuki:E:2010,hauert:JTB:2012,pichugin:JTB:2015} or environmental mutagenic agents \citep{nagao:ARG:1978,nunney:EA:2012}, which, even when rare, might result in several cooperators entering the population at once. In the case of social networks, cooperators could arise through design rather than mutation or exploration; if cooperators can be strategically planted within the population, then one can ask how to do so in order to maximize the chances that cooperators take over. Therefore, the effects of selection on arbitrary numbers and configurations of cooperators and defectors play an important role in the evolutionary dynamics of cooperation.

When starting from a configuration with $n$ cooperators and $N-n$ defectors, weak selection is said to favor the evolution of cooperation (on a regular graph) if the probability that cooperators fixate exceeds $n/N$ if $w$ is sufficiently small but positive. This comparison is based on the fact that the fixation probability of $n$ cooperators for neutral drift ($w=0$) is $n/N$. \citet{ohtsuki:Nature:2006} show that, on large regular graphs of degree $k$, selection favors the fixation of a single, randomly-placed cooperator under DB updating as long as
\begin{linenomath}
\begin{align}
\frac{b}{c} > k .
\end{align}
\end{linenomath}
\citet{taylor:Nature:2007} show that for finite bi-transitive graphs of size $N$ and degree $k$, the condition for selection to favor the fixation of a single cooperator is
\begin{linenomath}
\begin{align}\label{eq:ratioRandom}
\frac{b}{c} > \frac{k\left(N-2\right)}{N-2k} .
\end{align}
\end{linenomath}
Bi-transitive graphs constitute a subset of regular graphs.

In another refinement of the `$b/c>k$' result, \citet{chen:AAP:2013} shows that, for any $n$ with $0<n<N$, selection favors cooperation when starting from a random configuration of $n$ cooperators and $N-n$ defectors on a regular graph of size $N$ and degree $k$ if and only if Eq. (\ref{eq:ratioRandom}) holds. (Note that regularity is a weaker requirement on the population structure than bi-transitivity.) This ratio, which characterizes when selection increases the fixation probability of cooperators, is independent of the location of the mutants, despite the fact that the probability of fixation itself depends on the location \citep{mcavoy:JRSI:2015}. As the population size, $N$, gets large, the critical benefit-to-cost ratio of \eq{ratioRandom} approaches $k$, which recovers the result of \citet{ohtsuki:Nature:2006}. Our goal here is to move beyond \eq{ratioRandom} and give an explicit, computationally feasible critical benefit-to-cost ratio for any configuration of cooperators and defectors on any regular graph.

Given the profusion of possible ways to structure a population of a fixed size, it quickly becomes difficult to determine when a population structure favors the evolution of cooperation. Here, we provide a solution to this problem for BD and DB updating on regular graphs. We show that, for any configuration of cooperators and defectors, (i) cooperation is never favored for BD updating, and (ii) for DB updating, there exists a simple, explicit critical benefit-to-cost ratio that characterizes when selection favors the emergence of cooperation. Moreover, if $N$ is the population size and $k$ is the degree of the graph, then the complexity of calculating this ratio is $O\left(k^{2}N\right)$, and, in particular, linear in $N$. Thus, while the calculations of fixation probabilities in structured populations are famously intractable \citep{voorhees:PRSA:2013,ibsenjensen:PNAS:2015,hindersin:B:2016}, the determination of whether or not selection increases the probability of fixation, for weak selection, is markedly simpler.

In addition to providing a computationally feasible way of determining whether selection favors cooperation on a particular graph, our results highlight the importance of the initial configuration for the emergence of cooperation. Depending on the graph, adding additional cooperators to the initial condition can either suppress or promote the evolution of cooperation. A careful choice of configuration of cooperators and defectors can minimize the critical benefit-to-cost ratio for selection to favor cooperation. If cooperation is not favored by selection in such a strategically chosen initial state, then it cannot be favored under any other initial configuration. In this sense, there exists a configuration that is most conducive to the evolution of cooperation, which is not apparent from looking at single-cooperator configurations or random configurations with $n$ cooperators since these initial configurations need not minimize the critical benefit-to-cost ratio.

\section{Results}

\subsection{Critical benefit-to-cost ratios}
Let $\xi$ be a configuration of cooperators and defectors on a fixed regular graph of size $N$ and degree $k$, and let $\mathbf{C}$ denote the configuration consisting solely of cooperators. For the donation game, the probability that cooperators take over the population when starting from state $\xi$ may be viewed as a function of the selection intensity, $\rho_{\xi ,\mathbf{C}}\left(w\right)$. We consider here the following question: when does weak selection increase the probability that cooperators fixate? In other words, when is $\rho_{\xi,\mathbf{C}}\left(w\right) >\rho_{\xi,\mathbf{C}}\left(0\right)$ for sufficiently small $w>0$? Note that if there are $n$ cooperators in state $\xi$, then $\rho_{\xi ,\mathbf{C}}\left(0\right) =n/N$, so this condition is equivalent to $\rho_{\xi ,\mathbf{C}}\left(w\right) >n/N$ for small $w>0$.

To answer this question, we first need to introduce some notation. If $x$ is a vertex of the graph and $\xi$ is a configuration, then let $f_{1}\left(x,\xi\right)$ and $f_{0}\left(x,\xi\right)$ be the frequencies of cooperators and defectors, respectively, among the neighbors of the player at vertex $x$. Similarly, let $f_{10}\left(x,\xi\right)$ be the fraction of paths of length two, starting at $x$, that consist of a cooperator followed by a defector. From these quantities, let
\begin{linenomath}
\begin{subequations}\label{eq:localFrequencies}
\begin{align}
\overline{f_{1}} &:= \frac{1}{N}\sum_{x\in V}f_{1}\left(x,\xi\right) ; \\
\overline{f_{0}} &:= \frac{1}{N}\sum_{x\in V}f_{0}\left(x,\xi\right) ; \\
\overline{f_{10}} &:= \frac{1}{N}\sum_{x\in V}f_{10}\left(x,\xi\right) ; \\
\overline{f_{1}f_{0}} &:= \frac{1}{N}\sum_{x\in V}f_{1}\left(x,\xi\right) f_{0}\left(x,\xi\right) ,
\end{align}
\end{subequations}
\end{linenomath}
which are obtained by averaging these `local frequencies' over all of the players in the population. From these local frequencies, which are straightforward to calculate (see \fig{localFrequencies}), we obtain our main result: for small $w>0$, $\rho_{\xi,\mathbf{C}}\left(w\right) >\rho_{\xi,\mathbf{C}}\left(0\right)$ if and only if the benefit-to-cost ratio exceeds the critical value
\begin{linenomath}
\begin{align}\label{eq:ratioGeneral}
\left(\frac{b}{c}\right)_{\xi}^{\ast} &= \frac{k\left(N\overline{f_{1}}\cdot\overline{f_{0}}-\overline{f_{10}}\right)}{N\overline{f_{1}}\cdot\overline{f_{0}}-k\overline{f_{10}}-k\overline{f_{1}f_{0}}}
\end{align}
\end{linenomath}
whenever the denominator is positive (and $\infty$ otherwise). Since the calculations of $\overline{f_{1}}$, $\overline{f_{0}}$, and $\overline{f_{1}f_{0}}$ are $O\left(kN\right)$ and the calculation of $\overline{f_{10}}$ is $O\left(k^{2}N\right)$, it follows that the complexity of finding the critical benefit-to-cost ratio is $O\left(k^{2}N\right)$, so it is feasible to calculate even when the population is large. Note also that if $\widehat{\xi}$ is the state obtained by swapping cooperators and defectors in $\xi$, then both $\xi$ and $\widehat{\xi}$ have the same critical benefit-to-cost ratio. We discuss these `conjugate' states further in our treatment of structure coefficients.

\begin{figure}
\centering
\includegraphics[scale=2.0]{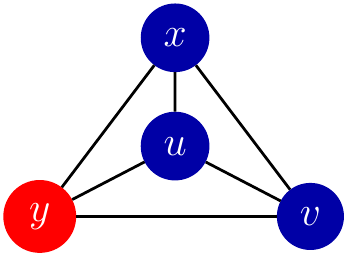}
\caption{Calculation of the local frequencies of \eq{localFrequencies}, $f_{1}\left(x,\xi\right)$, $f_{0}\left(x,\xi\right)$, and $f_{10}\left(x,\xi\right)$, where $\xi$ is the configuration consisting of a defector at vertex $y$ and cooperators elsewhere. Among the three neighbors of the player at vertex $x$, two are cooperators ($u$ and $v$) and one is a defector ($y$); thus, $f_{1}\left(x,\xi\right) =2/3$ and $f_{0}\left(x,\xi\right) =1/3$. Furthermore, of the nine paths of length two that begin at vertex $x$, only two ($x\rightarrow u\rightarrow y$ and $x\rightarrow v\rightarrow y$) consist of a cooperator followed by a defector, and it follows that $f_{10}\left(x,\xi\right) =2/9$.\label{fig:localFrequencies}}
\end{figure}

When $\xi$ has just a single cooperator, the ratio of \eq{generalPayoffMatrix} reduces to that of \eq{ratioRandom}, which, in particular, does not depend on the location of the cooperator. This property is notable because the fixation probability itself usually does depend on the location of the cooperator, even on regular graphs \citep{mcavoy:JRSI:2015}. We show in Methods that one recovers from \eq{ratioGeneral} the result of \citet{chen:AAP:2013} that \eq{ratioRandom} gives the critical benefit-to-cost ratio for a randomly-chosen configuration with a fixed number of cooperators.

For fixed $k\geqslant 2$, the critical benefit-to-cost ratio in \eq{ratioGeneral} converges uniformly to $k$ as $N\rightarrow\infty$ (see Methods). Therefore, on sufficiently large graphs, the critical ratio is approximated by $k$ for any configuration, regardless of the number of cooperators. As a result, on large graphs there is less of a distinction between the various transient (non-monomorphic) states in terms of whether or not selection favors the fixation of cooperators. On smaller graphs, these transient states can behave quite differently from one another. This effect is particularly pronounced on very small social networks in which cooperators can be strategically planted in the population to ensure that cooperators are favored by selection.

\subsubsection{Strategic placement of cooperators in (small) social networks}

Among the more interesting consequences of \eq{ratioGeneral} are its implications for the success of cooperators as a function of the initial configuration. Recall that \eq{ratioRandom} gives the critical benefit-to-cost ratio for both (i) configurations with a single cooperator and (ii) random configurations with a fixed number of cooperators. When cooperators and defectors are configured randomly, this critical ratio is independent of the number of cooperators, which suggests that the effects of selection cannot be improved by increasing the initial abundance of cooperators.

\eq{ratioGeneral}, on the other hand, shows that the initial configuration of cooperators, including their abundance, does affect how selection acts on the population. First of all, there are graphs for which the critical benefit-to-cost ratio is infinite for configurations with a single cooperator but finite for some configurations with multiple cooperators (see \fig{promoteSuppress}(a)). In contrast, there are graphs for which this ratio is finite for configurations with a single cooperator but infinite for some states with multiple cooperators (see \fig{promoteSuppress}(b)). Therefore, despite the fact that the critical ratio for a single mutant is the same as the critical ratio for a random configuration with any fixed number of mutants, the critical ratio does (in general) depend on the number of mutants present in the configuration.

\begin{figure}
\centering
\includegraphics[scale=1.0]{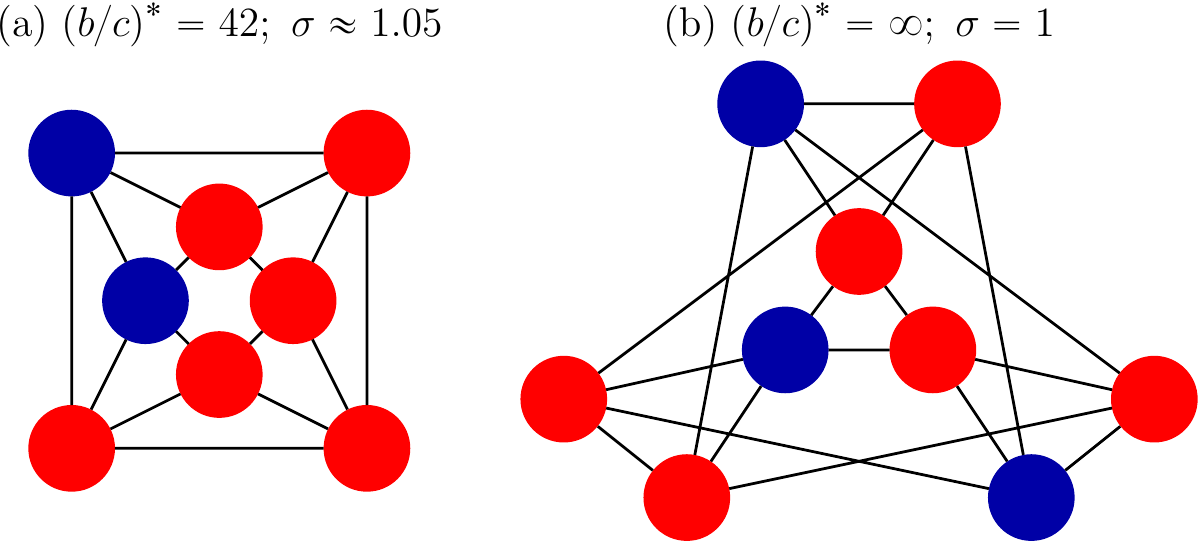}
\caption{Two graphs showing configurations of cooperators (blue) and defectors (red). (a) Cooperation can be favored for the initial condition that is shown since the critical benefit-to-cost ratio is $42$ and, in particular, finite. However, the fixation of cooperation cannot be favored for any initial configuration with a single cooperator on this graph. (b) Cooperation cannot be favored for the initial condition that is shown since the critical benefit-to-cost ratio is infinite. However, any initial configuration with a single cooperator has a critical benefit-to-cost ratio of $28$. Therefore, the addition of cooperators to the initial configuration can either favor cooperation, (a), or suppress it, (b). The critical benefit-to-cost ratio can also be expressed in terms of a well-known quantity known as a ``structure coefficient," $\sigma$, which satisfies $\left(b/c\right)^{\ast}=\left(\sigma +1\right) /\left(\sigma -1\right)$.\label{fig:promoteSuppress}}
\end{figure}

We say that a configuration has isolated cooperators (resp. defectors) if the minimum distance between any two cooperators (resp. defectors) is at least three steps. Let $N_{0}$ denote the maximum number of isolated strategies that a configuration can carry. (Examples of configurations with isolated cooperators on a graph with $N_{0}=3$ are given in \fig{fruchtMinMax}.) If a strategy (cooperate or defect) appears only once in a configuration, then that strategy is clearly isolated, so $N_{0}\geqslant 1$. We show in Methods that if $N>2k$, then cooperation can be favored for a mixed initial condition with $n$ cooperators whenever $1\leqslant n\leqslant N_{0}+1$ or $1\leqslant N-n\leqslant N_{0}+1$, and, moreover, these bounds on $n$ are sharp. Stated differently, under these conditions any configuration with $n$ cooperators has a finite critical benefit-to-cost ratio. Furthermore, if $N_{0}\geqslant 2$, then, for any $n$ with $2\leqslant n\leqslant N_{0}$, there exists a configuration with $n$ cooperators whose critical ratio is strictly less than the ratio for a single cooperator (\eq{ratioRandom}). Such a configuration necessarily has no isolated strategies since the minimum critical ratio among configurations with an isolated strategy is attained by any state with just a single cooperator.

\begin{figure}
\centering
\includegraphics[scale=1.0]{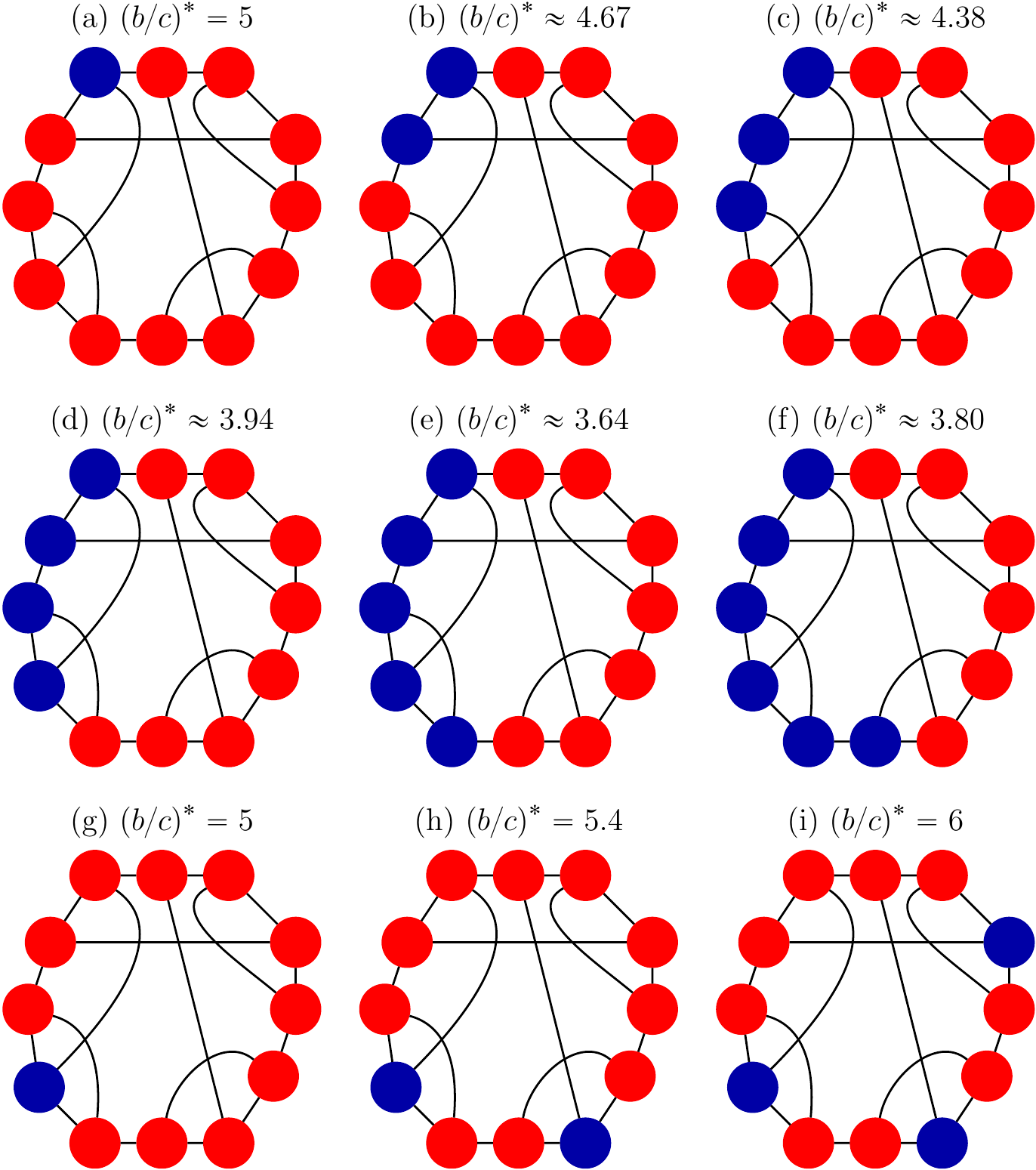}
\caption{Configurations of cooperators and defectors on the Frucht graph, a $3$-regular graph with $12$ vertices and no non-trivial symmetries \citep[see][]{frucht:CJM:1949}. Panels (a)-(f) show the effects on the critical benefit-to-cost ratio of adding additional cooperators to the initial state. Panel (e) shows the global minimum of $\left(b/c\right)_{\xi}^{\ast}$, which is achieved by just (e) and its conjugate; adding additional cooperators to the configuration in (e) only increases $\left(b/c\right)_{\xi}^{\ast}$. The configuration of (e) is `optimal' for cooperation in the sense that if selection increases the fixation probability of cooperators in some state, then it does so in state (e) as well. Relative to all possible initial states, selection can increase the fixation probability of cooperators in (e) under the smallest $b/c$ ratio. Panels (g)-(i) show that when cooperators are added in a different order (starting with just a single cooperator), the critical benefit-to-cost ratio can actually be increased. Each of these three configurations has isolated cooperators, and (i) gives the global maximum of $\left(b/c\right)_{\xi}^{\ast}$, which is achieved by just (i) and its conjugate. Since $N_{0}=3$, (i) is a maximal isolated configuration. The initial state in (i) is least conducive to cooperation in the sense that, relative to all other initial configurations, (i) requires the largest $b/c$ ratio for selection to increase the fixation probability of cooperators. If selection increases this fixation probability when starting from state (i), then it does so when starting from any other mixed initial configuration.\label{fig:fruchtMinMax}}
\end{figure}

The strategic placement of cooperators and defectors can therefore produce a critical benefit-to-cost ratio that is less than the ratio for a single cooperator among defectors. In fact, starting from a configuration with just one cooperator, one can reduce this critical ratio by placing a second cooperator adjacent to the first cooperator (see Methods). If $b/c$ lies below \eq{ratioRandom} and above \eq{ratioGeneral}, then a strategically chosen configuration can ensure that the fixation of cooperation is favored by selection even if it is disfavored for any single-cooperator state. This behavior is particularly pronounced on small social networks, where the critical ratios take on a significant range of values (see \fig{fruchtMinMax}), and less apparent on large networks, where the critical ratios are much closer to the degree of the graph, $k$. Fortunately, on small networks it is easier to directly search for configurations that have small critical benefit-to-cost ratios via \eq{ratioGeneral}.

\subsection{Structure coefficients}
Consider now a generic $2\times 2$ game whose payoff matrix is
\begin{linenomath}
\begin{align}\label{eq:generalPayoffMatrix}
\bordermatrix{%
 & A & B \cr
A &\ a & \ b \cr
B &\ c & \ d \cr
}\ .
\end{align}
\end{linenomath}
The donation game is a special case of this game with $A$ indicating a cooperator and $B$ indicating a defector. If $\mathbf{A}$ denotes the monomorphic state consisting of only $A$-players and if $\xi$ is a configuration of $A$- and $B$-players, then a natural generalization of the question we asked for the donation game is the following: when is $\rho_{\xi ,\mathbf{A}}\left(w\right) >\rho_{\xi ,\mathbf{A}}\left(0\right)$ for sufficiently small $w>0$? That is, when does (weak) selection favor the fixation of $A$ when starting from state $\xi$? For technical reasons, this question is more difficult to answer when the payoff matrix is \eq{generalPayoffMatrix} instead of that of the donation game. There is, however, an alternative way of generalizing the critical benefit-to-cost ratio to \eq{generalPayoffMatrix}.

When considering the evolutionary success of strategy $A$ based on configurations with only one mutant, another standard measure is whether the fixation probability of a single $A$-mutant in a $B$-population exceeds that of a single $B$-mutant in an $A$-population \citep[see][Eq. \textbf{2}]{tarnita:JTB:2009}. That is, one compares the fixation probability of $A$ to the fixation probability of $B$ after swapping $A$ and $B$ in the initial state. This interchange of strategies may be defined for any initial state: formally, if $\xi$ is a configuration of $A$-players and $B$-players, the conjugate of $\xi$, written $\widehat{\xi}$, is the state obtained by swapping $A$ and $B$ in $\xi$. In other words, the $A$-players in $\xi$ are the $B$-players in $\widehat{\xi}$.

A natural generalization of this criterion to arbitrary configurations involves comparing the fixation probability of $A$ in $\xi$ to the fixation probability of $B$ in $\widehat{\xi}$. Let $\mathbf{A}$ and $\mathbf{B}$ be the monomorphic states consisting of all $A$-players and all $B$-players, respectively. In this context, our main result is that $\rho_{\xi ,\mathbf{A}}\left(w\right) >\rho_{\widehat{\xi},\mathbf{B}}\left(w\right)$ for all sufficiently small $w>0$ if and only if
\begin{linenomath}
\begin{align}\label{eq:sigmaRule}
\sigma_{\xi}a + b &> c + \sigma_{\xi}d ,
\end{align}
\end{linenomath}
where, for the DB updating,
\begin{linenomath}
\begin{align}\label{eq:structureCoefficient}
\sigma_{\xi} &= \frac{N\left(1+\frac{1}{k}\right)\overline{f_{1}}\cdot\overline{f_{0}} -2\overline{f_{10}}-\overline{f_{1}f_{0}}}{N\left(1-\frac{1}{k}\right)\overline{f_{1}}\cdot\overline{f_{0}}+\overline{f_{1}f_{0}}} .
\end{align}
\end{linenomath}
In Methods, we give an explicit formula for the structure coefficient, $\sigma_{\xi}$, for BD updating as well. Just as it is for the critical benefit-to-cost ratio of \eq{ratioGeneral}, the complexity of calculating $\sigma_{\xi}$ is $O\left(k^{2}N\right)$. In fact, the relationship between $\left(b/c\right)_{\xi}^{\ast}$ and $\sigma_{\xi}$ is remarkably straightforward:
\begin{linenomath}
\begin{align}\label{eq:ratioSigmaRelationship}
\left(\frac{b}{c}\right)_{\xi}^{\ast} &= \frac{\sigma_{\xi}+1}{\sigma_{\xi}-1} ,
\end{align}
\end{linenomath}
which, for DB updating, generalizes a result of \citet{tarnita:JTB:2009} to arbitrary configurations. Note that the critical benefit-to-cost ratio increases as $\sigma_{\xi}$ decreases. Moreover, unlike the critical benefit-to-cost ratio, $\sigma_{\xi}$ is always finite. Interestingly, both $\left(b/c\right)_{\xi}^{\ast}$ and $\sigma_{\xi}$ are invariant under conjugation, meaning they are the same for $\widehat{\xi}$ as they are for $\xi$.

For the donation game, \eq{sigmaRule} is equivalent to $b/c>\left(b/c\right)_{\xi}^{\ast}$. Of course, \eq{sigmaRule} applies to a broader class of games as well and represents a simple way to compare the success of a strategy ($A$) relative to its alternative ($B$) when selection is weak. In this sense, \eq{sigmaRule} may be thought of as a generalization of the critical benefit-to-cost rule to arbitrary $2\times 2$ games.

\section{Discussion}
Selection always opposes the emergence of cooperation for BD updating, regardless of the configuration of cooperators and defectors (see Methods). This result is consistent with previous studies showing that cooperation cannot be favored under random configurations \citep{ohtsuki:Nature:2006,ohtsuki:JTB:2006,grafen:JEB:2007,grafen:JTB:2008}, and it specifies further that cooperation cannot be favored under any configuration. For general $2\times 2$ games given by \eq{generalPayoffMatrix}, we show in Methods that one can also find a simple formula for $\sigma_{\xi}$ in the selection condition of \eq{sigmaRule} that can be easily calculated for a given graph.

Remarkably, for DB updating, both the critical benefit-to-cost ratio and $\sigma_{\xi}$ depend on only local properties of the configuration, which makes these quantities straightforward to calculate. Furthermore, the complexity of calculating both of these quantities is $O\left(k^{2}N\right)$, where $N$ is the size of the population and $k$ is the degree of the graph, so they are computationally feasible even on large graphs. Therefore, our results provide a tractable way of determining whether or not selection favors cooperation for any configuration.

Finding an optimal configuration, which is one that minimizes the critical benefit-to-cost ratio, seems to be a difficult nonlinear optimization problem. The critical ratio is easily computed for any given configuration, but a graph of size $N$ has $2^{N}$ possible configurations, which makes a brute-force search unfeasible for all but small $N$. Our results qualitatively show that both the abundance and the configuration of cooperators can strongly influence the effects of selection. We leave as an open problem whether it is possible to find a polynomial-time algorithm that produces an optimal configuration on any regular graph. However, since \eq{ratioGeneral} is extremely easy to compute for a given configuration, and since small graphs generally exhibit broader variations of critical ratios than do larger graphs (since $\left(b/c\right)_{\xi}^{\ast}\rightarrow k$ uniformly as $N\rightarrow\infty$), it is typically feasible to find a state that is more conducive to cooperation than a random configuration.

Our analysis of arbitrary configurations uncovers two important features of the process with DB updating: (i) there exist graphs that suppress the spread of cooperation when starting from a single mutant but promote the spread of cooperation when starting from configurations with multiple mutants (\fig{promoteSuppress}(a)), and (ii) there exist graphs that promote the spread of cooperation when starting from a single mutant but suppress the spread of cooperation when starting from configurations with many mutants (\fig{promoteSuppress}(b)). The proper initial configuration is thus a crucial determinant of the evolutionary dynamics, and our results help to engineer initial conditions that promote the emergence of cooperation on social networks. More importantly, these results provide deeper mathematical insights into the complicated problem of how selection affects the outcome of an evolutionary process at each point along an evolutionary trajectory.

\section{Methods}

\subsection{Notation and general setup}

In what follows, the population structure is given by a simple, connected, $k$-regular graph, $G=\left(V,E\right)$, where $V$ denotes the vertex set of $G$ and $E$ denotes the edge set. For $x,y\in V$, we write $x\sim y$ to indicate that $x$ and $y$ are neighbors, i.e. $\left(x,y\right)\in E$. Throughout the paper, we assume that $\#{V}=N$ is finite and $k\geqslant 2$.

The payoff matrix for a generic game with strategies $A$ and $B$ is
\begin{linenomath}
\begin{align}\label{eq:generic2by2}
\bordermatrix{%
 & A & B \cr
A &\ a & \ b \cr
B &\ c & \ d \cr
}\ .
\end{align}
\end{linenomath}
A configuration on $G$, denoted $\xi$, is a function from $V$ to $\left\{0,1\right\}$. If $\xi\left(x\right) =1$, then the player at vertex $x$ is using $A$; otherwise, this player is using $B$. A special case of \eq{generic2by2} is the donation game,
\begin{linenomath}
\begin{align}\label{eq:donationMatrix}
\bordermatrix{%
 & C & D \cr
C &\ b-c & \ -c \cr
D &\ b & \ 0 \cr
}\ .
\end{align}
\end{linenomath}
When we are considering the donation game, $\xi\left(x\right) =1$ indicates a cooperator at vertex $x$ and $\xi\left(x\right) =0$ indicates a defector at vertex $x$. For any such configuration, $\xi$, the conjugate configuration, $\widehat{\xi}$, is defined as $\widehat{\xi}\left(x\right) =1-\xi\left(x\right)$ for $x\in V$. In other words, $\widehat{\xi}\left(x\right) =0$ if $\xi\left(x\right) =1$ and $\widehat{\xi}\left(x\right) =1$ if $\xi\left(x\right) =0$.

For any configuration, $\xi$, on a $k$-regular graph, $G$, and for $x\in V$ and $i,j\in\left\{0,1\right\}$, let
\begin{linenomath}
\begin{subequations}\label{eq:localFrequencyDefinitions}
\begin{align}
f_{i}\left(x,\xi\right) &= \frac{\#\left\{y\in V\ :\ x\sim y\textrm{ and }\xi\left(y\right) =i\right\}}{k} ; \\
f_{ij}\left(x,\xi\right) &= \frac{\#\left\{\left(y,z\right)\in V\times V\ :\ x\sim y\sim z,\ \xi\left(y\right) =i,\textrm{ and }\xi\left(z\right) =j\right\}}{k^{2}} .
\end{align}
\end{subequations}
\end{linenomath}
For any function, $f\left(x,\xi\right)$, let
\begin{linenomath}
\begin{align}\label{si:eq:averageOfFunction}
\overline{f}\left(\xi\right) &:= \frac{1}{N}\sum_{x\in V}f\left(x,\xi\right)
\end{align}
\end{linenomath}
be the arithmetic average of $f$ with respect to the vertices of $G$. (\fig{localFrequencies} in the main text gives an example of how these quantities are calculated.) The arithmetic averages of the functions formed from these local frequencies admit simple probabilistic interpretations: If a random walk is performed on the graph at a starting point chosen uniformly-at-random, then $\overline{f_{1}}\left(\xi\right)$ (resp. $\overline{f_{0}}\left(\xi\right) =1-\overline{f_{1}}\left(\xi\right)$) is the probability that the player at the first step is a cooperator (resp. a defector), and $\overline{f_{10}}\left(\xi\right)$ is the probability that the player at the first step is a cooperator and the player at the second step is a defector. If two independent random walks are performed at the same starting point, then $\overline{f_{1}f_{0}}\left(\xi\right)$ is the probability of finding a cooperator at step one in the first random walk and a defector at step one in the second random walk. If one chooses an enumeration of the vertices and represents $G$ by an adjacency matrix, $\Gamma$, and $\xi$ as a column vector, then
\begin{linenomath}
\begin{subequations}
\begin{align}
\overline{f_{10}}\left(\xi\right) &= \frac{1}{kN}\xi^{\T}\Gamma\widehat{\xi} ; \\
\overline{f_{1}f_{0}}\left(\xi\right) &= \frac{1}{k^{2}N}\xi^{\T}\Gamma^{2}\widehat{\xi} ,
\end{align}
\end{subequations}
\end{linenomath}
which gives a simple, alternative way to calculate each of $\overline{f_{10}}\left(\xi\right)$ and $\overline{f_{1}f_{0}}\left(\xi\right)$.

Let $w\geqslant 0$ be a sufficiently small selection intensity. The effective payoff of an $i$-player at vertex $x$ in configuration $\xi$, denoted $e_{i}^{w}\left(x,\xi\right)$, for the game whose payoffs are given by the generic matrix of \eq{generic2by2}, is defined via
\begin{linenomath}
\begin{subequations} 
\begin{align}
e_{1}^{w}\left(x,\xi\right) &= 1 + wk\left[ a f_{1}\left(x,\xi\right) + b f_{0}\left(x,\xi\right) \right] ; \\
e_{0}^{w}\left(x,\xi\right) &= 1 + wk\left[ c f_{1}\left(x,\xi\right) + d f_{0}\left(x,\xi\right) \right] .
\end{align}
\end{subequations}
\end{linenomath}

The basic measure we use here to define the evolutionary success of a strategy is fixation probability. If $X$ is a strategy (either in $\left\{A,B\right\}$ or in $\left\{C,D\right\}$), let $\mathbf{X}$ denote the monomorphic configuration in which every player uses $X$. For any configuration, $\xi$, and a fixed game, we write $\rho_{\xi ,\mathbf{X}}\left(w\right)$ to denote the probability that strategy $X$ fixates in the population given an initial configuration, $\xi$, and selection intensity, $w$.

In the following sections, we consider DB and BD updating under weak selection ($w\ll 1$).

\subsection{DB updating}

Under DB updating, a player is first selected for death uniformly-at-random from the population. The neighbors of this player then compete to reproduce, with probability proportional to fitness (effective payoff), and the offspring of the reproducing player fills the vacancy. We assume that the strategy of the offspring is inherited from the parent. Therefore, if the player at vertex $x$ dies when the state of the population is $\xi$, then the probability that this vacancy is filled by an $i$-player is
\begin{linenomath}
\begin{align}\label{eq:dbProbability}
\pi_{i}^{w}\left(x,\xi\right) &= \frac{\displaystyle\sum_{y\in V\,:\,y\sim x}e_{i}^{w}\left(y,\xi\right)\mathds{1}_{\xi\left(y\right) =i}}{\displaystyle\sum_{y\in V\,:\,y\sim x}\left[e_{1}^{w}\left(y,\xi\right)\xi\left(y\right) +e_{0}^{w}\left(y,\xi\right)\widehat{\xi}\left(y\right)\right]} .
\end{align}
\end{linenomath}
This DB update rule defines a rate-$N$ pure-jump Markov chain, where $N$ is the size of the population. When $w=0$, this process reduces to the voter model such that, at each update time, a random individual adopts the strategy of a random neighbor \citep[see][]{liggett:S:1985}.

\subsubsection{Critical benefit-to-cost ratios}

Recall that our goal is to determine when, for any configuration, $\xi$, $\rho_{\xi ,\mathbf{C}}\left(w\right) >\rho_{\xi ,\mathbf{C}}\left(0\right)$ for all sufficiently small $w>0$. We first need some technical lemmas:
\begin{lemma}\label{lem:firstLemma}
For any configuration, $\xi$, we have the following first-order expansion as $w\rightarrow 0^{+}$:
\begin{linenomath}
\begin{align}\label{eq:firstLemmaEquation}
\rho_{\xi ,\mathbf{C}}\left(w\right) = \rho_{\xi ,\mathbf{C}}\left(0\right) + w\Bigg( &ak\int_{0}^{\infty}\mathbb{E}_{\xi}^{0}\Big[\overline{f_{0}f_{11}}\left(\xi_{t}\right)\Big]\,dt + bk\int_{0}^{\infty}\mathbb{E}_{\xi}^{0}\Big[\overline{f_{0}f_{10}}\left(\xi_{t}\right)\Big]\,dt \nonumber \\
&-ck\int_{0}^{\infty}\mathbb{E}_{\xi}^{0}\Big[\overline{f_{1}f_{01}}\left(\xi_{t}\right)\Big]\,dt -dk\int_{0}^{\infty}\mathbb{E}_{\xi}^{0}\Big[\overline{f_{1}f_{00}}\left(\xi_{t}\right)\Big]\,dt\Bigg) + O\left(w^{2}\right) .
\end{align}
\end{linenomath}
\end{lemma}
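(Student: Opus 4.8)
The plan is to treat $w$ as a perturbation parameter for the generator of the DB Markov chain and to extract the first-order term from a Dynkin/Green's-function representation, taking the neutral voter model as the unperturbed process. Writing $\xi^x$ for the configuration obtained by flipping the strategy at $x$, the DB rule of \eq{dbProbability} gives the generator
\[
\left(\mathcal{L}^w\phi\right)(\xi) = \sum_{x\in V}\Big[\mathds{1}_{\xi(x)=0}\,\pi_1^w(x,\xi) + \mathds{1}_{\xi(x)=1}\,\pi_0^w(x,\xi)\Big]\big(\phi(\xi^x)-\phi(\xi)\big),
\]
since only a change of strategy at the dying vertex alters the state. At $w=0$ one has $e_i^0\equiv 1$, hence $\pi_i^0(x,\xi)=f_i(x,\xi)$, so $\mathcal{L}^0$ is exactly the voter-model generator. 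The function $u^0(\xi):=\rho_{\xi,\mathbf{C}}(0)=n/N$ (with $n$ the number of cooperators) is harmonic for $\mathcal{L}^0$: a short edge-counting argument shows $\sum_{x:\xi(x)=0}f_1(x,\xi)=\sum_{x:\xi(x)=1}f_0(x,\xi)$, both equalling the discordant-edge count divided by $k$, so that $\mathcal{L}^0 u^0=0$.

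Next I would establish the core analytic fact. The fixation probability $u^w(\xi):=\rho_{\xi,\mathbf{C}}(w)$ is the unique solution of $\mathcal{L}^w u^w=0$ on the non-monomorphic configurations with boundary values $u^w(\mathbf{C})=1$ and $u^w(\mathbf{D})=0$. Each $\pi_i^w$ is a ratio of two affine functions of $w$ whose denominator tends to $k>0$ as $w\to 0$, so the entries of $\mathcal{L}^w$ are rational in $w$ and analytic near $w=0$; since $-\mathcal{L}^0$ restricted to the transient states is invertible, invertibility persists for small $w$, and Cramer's rule makes $w\mapsto u^w(\xi)$ analytic at $0$. This legitimizes the expansion $u^w=u^0+wu'+O(w^2)$ and reduces the lemma to identifying $u'=\partial_w u^w|_{w=0}$. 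Differentiating $\mathcal{L}^w u^w=0$ at $w=0$ and using that the boundary data are $w$-independent, $u'$ vanishes at $\mathbf{C}$ and $\mathbf{D}$ and solves $\mathcal{L}^0 u'=-g$ on transient states, where $g:=\mathcal{L}'u^0$ with $\mathcal{L}':=\partial_w\mathcal{L}^w|_{w=0}$. Because $g$ vanishes at the monomorphic states (each of its terms carries a factor requiring both strategies to be present), the neutral Green's function yields
\[
u'(\xi)=\int_0^\infty\mathbb{E}_\xi^0\big[g(\xi_t)\big]\,dt,
\]
the integral converging because $g$ is bounded and the neutral process has finite expected absorption time.

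The remaining step is the explicit computation of $g=\mathcal{L}'u^0$. Using $\partial_w e_1^w|_0=k\left[af_1+bf_0\right]$ and $\partial_w e_0^w|_0=k\left[cf_1+df_0\right]$ together with the quotient rule applied to \eq{dbProbability}, and the key identity $\sum_{y\sim x,\,\xi(y)=i}f_j(y,\xi)=kf_{ij}(x,\xi)$, I would reduce $\partial_w\pi_1^w|_0$ and $\partial_w\pi_0^w|_0$ to combinations of the two-step frequencies $f_{ij}$. Summing over $x$ and combining the defector- and cooperator-indexed sums via $f_0(x,\xi)+f_1(x,\xi)=1$ collapses the indicator-restricted sums into full averages and gives
\[
g(\xi)=k\Big(a\,\overline{f_0f_{11}}(\xi)+b\,\overline{f_0f_{10}}(\xi)-c\,\overline{f_1f_{01}}(\xi)-d\,\overline{f_1f_{00}}(\xi)\Big).
\]
Substituting this into the Green's-function representation above, multiplying by $w$, and splitting the integral by linearity reproduces \eq{firstLemmaEquation} term by term.

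I expect the main obstacle to lie in the rigorous justification of the first-order expansion rather than in the algebra: one must confirm analyticity (or at least $C^1$-dependence) of $u^w$ at $w=0$ with genuine $O(w^2)$ control, and verify the Green's-function identity $(-\mathcal{L}^0)^{-1}g(\xi)=\int_0^\infty\mathbb{E}_\xi^0[g(\xi_t)]\,dt$, including convergence of the time integral through finiteness of $\mathbb{E}_\xi^0[\tau]$ for the absorption time $\tau$. The combinatorial reduction of $\mathcal{L}'u^0$, although lengthy, is routine once the identity $\sum_{y\sim x,\,\xi(y)=i}f_j(y,\xi)=kf_{ij}(x,\xi)$ and the normalization $f_0+f_1=1$ are in hand.
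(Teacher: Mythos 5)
Your proof is correct, and its algebraic half coincides with the paper's: your $g=\mathcal{L}'u^{0}$ is precisely the paper's $\overline{D}$, since $h_{0}=-h_{1}$ (from $\pi_{0}^{w}+\pi_{1}^{w}=1$) makes $\widehat{\xi}\left(x\right)h_{1}\left(x,\xi\right)-\xi\left(x\right)h_{0}\left(x,\xi\right)=h_{1}\left(x,\xi\right)$, and your quotient-rule computation of $h_{1}$ (using $\partial_{w}e_{i}^{w}\vert_{0}$, the identity $\sum_{y\sim x,\,\xi\left(y\right)=i}f_{j}\left(y,\xi\right)=kf_{ij}\left(x,\xi\right)$, and $f_{0}+f_{1}=1$) is the same one the paper performs. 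Where you genuinely diverge is on the analytic core: the paper does not prove the expansion $\rho_{\xi,\mathbf{C}}\left(w\right)=\rho_{\xi,\mathbf{C}}\left(0\right)+w\int_{0}^{\infty}\mathbb{E}_{\xi}^{0}\big[\overline{D}\left(\xi_{t}\right)\big]\,dt+O\left(w^{2}\right)$ at all---it cites Theorem 3.8 of \citep{chen:AAP:2013}---whereas you re-derive it from scratch by viewing $u^{w}=\rho_{\cdot,\mathbf{C}}\left(w\right)$ as the unique solution of the finite linear system $\mathcal{L}^{w}u^{w}=0$ with fixed boundary data, extracting analyticity at $w=0$ from invertibility of $-\mathcal{L}^{0}$ on the mixed states plus Cramer's rule, differentiating the system to get $\mathcal{L}^{0}u'=-g$, and inverting $\mathcal{L}^{0}$ via the occupation-time (Green's function) identity, which applies because $g$ vanishes at both absorbing states and the neutral voter model on a finite connected graph absorbs in finite expected time. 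All of these steps are sound for this finite-state chain, so your argument is complete and self-contained where the paper's is not. The trade-off, as the paper's own Remark indicates, is quantitative control: the cited theorem comes with an explicit bound on the $O\left(w^{2}\right)$ error in terms of the selection strength and the rate of reaching monomorphic states (hence an explicit range of $w$ for which the sign of the first-order coefficient decides the comparison of fixation probabilities), and it holds under milder assumptions on the update dynamics; your soft perturbation argument yields analyticity but no explicit constants.
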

\begin{proof}
By Theorem 3.8 in \citep{chen:AAP:2013}, we have
\begin{linenomath}
\begin{align}\label{eq:firstOrderExpansion}
\rho_{\xi ,\mathbf{C}}\left(w\right) = \rho_{\xi ,\mathbf{C}}\left(0\right) + w\int_{0}^{\infty}\mathbb{E}_{\xi}^{0}\Big[\overline{D}\left(\xi_{t}\right)\Big]\,dt + O\left(w^{2}\right)
\end{align}
\end{linenomath}
whenever $w$ is sufficiently small, where
\begin{linenomath}
\begin{subequations}\label{eq:DandH}
\begin{align}
\overline{D}\left(\xi\right) &= \frac{1}{N}\sum_{x\in V}\Big( \widehat{\xi}\left(x\right) h_{1}\left(x,\xi\right) - \xi\left(x\right) h_{0}\left(x,\xi\right)\Big) ; \\
h_{i}\left(x,\xi\right) &= \frac{d}{dw}\Big\vert_{w=0}\pi_{i}^{w}\left(x,\xi\right) .
\end{align}
\end{subequations}
\end{linenomath}
By the definition of $\pi_{i}^{w}$, \eq{dbProbability}, we have
\begin{linenomath}
\begin{subequations}
\begin{align}
h_{1}\left(x,\xi\right) &= akf_{0}f_{11}\left(x,\xi\right) + bkf_{0}f_{10}\left(x,\xi\right) - ckf_{1}f_{01}\left(x,\xi\right) - dkf_{1}f_{00}\left(x,\xi\right) ; \\
h_{0}\left(x,\xi\right) &= -h_{1}\left(x,\xi\right) ,
\end{align}
\end{subequations}
\end{linenomath}
so \eq{firstLemmaEquation} follows at once from \eq{firstOrderExpansion}, which completes the proof.
\end{proof}

\begin{remark}
The approach of studying fixation probabilities via first-order expansions, as in \eq{firstOrderExpansion}, also appears in \citep{rousset:JTB:2003}, \citep{lessard:JMB:2007}, and \citep{ladret:JTB:2008}. The proof of \eq{firstOrderExpansion} in \citep{chen:AAP:2013}, which is valid under mild assumptions on the game dynamics, was obtained independently and is a particular consequence of a series-like expansion for fixation probabilities. In addition to the identification of the first-order coefficients $\int_{0}^{\infty}\mathbb{E}_{\xi}^{0}\Big[\overline{D}\left(\xi_{t}\right)\Big]\,dt$ in selection strength, $w$, in \eq{firstOrderExpansion}, the proof of this series-like expansion obtains a bound for the $O\left(w^{2}\right)$ error terms that is explicit in selection strength and the rate to reach monomorphic configurations of the underlying game dynamics. Therefore, one can deduce an explicit range of selection strengths such that the comparison of fixation probabilities requires only the sign of $\int_{0}^{\infty}\mathbb{E}_{\xi}^{0}\Big[\overline{D}\left(\xi_{t}\right)\Big]\,dt$. We refer the reader to \citep{wu:PRE:2010} for a further discussion of selection strengths and their consequences for the comparison of fixation probabilities.
\end{remark}

In order to compute the voter-model integrals in \eq{firstLemmaEquation}, we now turn to coalescing random walks on graphs. Suppose that $\left\{B^{x}\right\}_{x\in V}$ is a system of rate-$1$ coalescing random walks on $G$, where, for each $x\in V$, $B^{x}$ starts at $x$. These interacting random walks move independently of one another until they meet, and thereafter they move together. The duality between the voter model and these random walks is given by
\begin{linenomath}
\begin{align}\label{eq:voterDuality}
\mathbb{E}_{\xi}^{0}\left[\prod_{x\in S}\xi_{t}\left(x\right)\right] &= \mathbb{E}\left[\prod_{x\in S}\xi\left(B_{t}^{x}\right)\right]
\end{align}
\end{linenomath}
for each $S\subseteq V$, $t>0$, and strategy configuration, $\xi$. For more information on this duality, including a proof of \eq{voterDuality} and its graphical representation, see \S{III.4} and \S{III.6} in \citep{liggett:S:1985}.

Consider now two discrete-time random walks on $G$, $\left(X_{n}\right)_{n\geqslant 0}$ and $\left(Y_{n}\right)_{n\geqslant 0}$, that start at the same vertex and are independent of $\left\{B^{x}\right\}_{x\in V}$. If the common starting point is $x\in V$, then we write $\mathbb{E}_{x}$ to denote the expectation with respect to this starting point. If the starting point is chosen with respect to the uniform distribution, $\pi$, then this expectation is denoted by $\mathbb{E}_{\pi}$. The random-walk probabilities, $\mathbb{P}_{x}$ and $\mathbb{P}_{\pi}$, are understood in the same way. Since $\displaystyle\sum_{x\in V}\frac{1}{N}\xi\left(x\right)\sum_{y\in V\,:\,y\sim x}\frac{1}{k}\widehat{\xi}\left(y\right) =\mathbb{E}_{\pi}\left[\xi\left(X_{0}\right)\widehat{\xi}\left(X_{1}\right)\right]$, for example, we will use these random walks to save notation when we compute the local frequencies of strategy configurations.

\begin{lemma}\label{lem:secondLemma}
If $f_{\ast 0}:=f_{10}+f_{00}$, then, for any configuration, $\xi$, we have
\begin{linenomath}
\begin{subequations}\label{eq:secondLemmaCalculations}
\begin{align}
\int_{0}^{\infty}\mathbb{E}_{\xi}^{0}\Big[\overline{f_{10}}\left(\xi_{t}\right)\Big]\,dt &= \frac{N\overline{f_{1}}\left(\xi\right)\overline{f_{0}}\left(\xi\right)}{2} ; \label{eq:integralOne} \\
\int_{0}^{\infty}\mathbb{E}_{\xi}^{0}\Big[\overline{f_{1}f_{0}}\left(\xi_{t}\right)\Big]\,dt &= \frac{N\overline{f_{1}}\left(\xi\right)\overline{f_{0}}\left(\xi\right)}{2} - \frac{\overline{f_{10}}\left(\xi\right)}{2} ; \label{eq:integralTwo} \\
\int_{0}^{\infty}\mathbb{E}_{\xi}^{0}\Big[\overline{f_{1}f_{\ast 0}}\left(\xi_{t}\right)\Big]\,dt &= \frac{N\left(1+\frac{1}{k}\right)\overline{f_{1}}\left(\xi\right)\overline{f_{0}}\left(\xi\right)}{2} - \frac{\overline{f_{10}}\left(\xi\right)}{2} - \frac{\overline{f_{1}f_{0}}\left(\xi\right)}{2} . \label{eq:integralThree}
\end{align}
\end{subequations}
\end{linenomath}
\end{lemma}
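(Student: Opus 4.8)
The plan is to evaluate all three integrals through the voter-model/coalescing-random-walk duality of \eq{voterDuality}, reducing each to a discordance-weighted occupation time of a pair of independent walks run until they meet. First I would rewrite each averaged local frequency as a two-point function along the discrete walks $(X_n)$ and $(Y_n)$ introduced before the lemma. Using that notation, $\overline{f_{10}}\left(\xi_t\right)=\mathbb{E}_{\pi}\left[\xi_t\left(X_1\right)\widehat{\xi}_t\left(X_2\right)\right]$ with $X_2\sim X_1\sim X_0\sim\pi$; $\overline{f_{1}f_{0}}\left(\xi_t\right)=\mathbb{E}_{\pi}\left[\xi_t\left(X_1\right)\widehat{\xi}_t\left(Y_1\right)\right]$ with $X_1,Y_1$ independent neighbors of the common start $X_0=Y_0$; and, since $f_{\ast 0}=f_{10}+f_{00}$ records a length-two path ending in a defector regardless of the middle vertex, $\overline{f_{1}f_{\ast 0}}\left(\xi_t\right)=\mathbb{E}_{\pi}\left[\xi_t\left(Y_1\right)\widehat{\xi}_t\left(X_2\right)\right]$. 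Thus every integrand has the common form $\mathbb{E}_{\pi}\left[\xi_t\left(P\right)\widehat{\xi}_t\left(Q\right)\right]$ for a pair of walk-endpoints $(P,Q)$, the three cases differing only in the joint law of $(P,Q)$.

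Second, I would fix a pair $(u,v)$ and analyze $J(u,v):=\int_0^{\infty}\mathbb{E}_{\xi}^{0}\left[\xi_t\left(u\right)\widehat{\xi}_t\left(v\right)\right]dt$. Writing $\widehat{\xi}_t\left(v\right)=1-\xi_t\left(v\right)$ and applying \eq{voterDuality} to $\mathbb{E}_{\xi}^{0}\left[\xi_t\left(u\right)\right]$ and to $\mathbb{E}_{\xi}^{0}\left[\xi_t\left(u\right)\xi_t\left(v\right)\right]$ converts both into expectations over coalescing walks $B^{u},B^{v}$. Because $\xi\left(w\right)\widehat{\xi}\left(w\right)=0$ and because $B_t^{u}=B_t^{v}$ after they coalesce, the contributions from after the meeting time cancel, leaving
\[
J(u,v)=\mathbb{E}\left[\int_0^{\tau_{uv}}\xi\left(\widetilde{B}_t^{u}\right)\widehat{\xi}\left(\widetilde{B}_t^{v}\right)\,dt\right],
\]
where $\widetilde{B}^{u},\widetilde{B}^{v}$ are independent rate-$1$ walks from $u,v$ and $\tau_{uv}$ is their meeting time. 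Each of the three integrals is therefore the discordance-weighted Green's function of the pair walk killed on the diagonal, averaged over the appropriate law of $(u,v)=(P,Q)$.

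Third, I would compute this killed occupation in closed form using the regular-graph structure. Comparing the killed integral with the corresponding free (never-killed) centered integral $\int_0^{\infty}\left(\mathbb{E}\left[\xi\left(\widetilde{B}_t^{u}\right)\widehat{\xi}\left(\widetilde{B}_t^{v}\right)\right]-\overline{f_{1}}\cdot\overline{f_{0}}\right)dt$ via the strong Markov property at $\tau_{uv}$, and using that the single walk equilibrates to the uniform distribution --- so that the density $\overline{f_{1}}\left(\xi_t\right)=\tfrac1N\sum_x\xi_t\left(x\right)$ is conserved in mean (the voter-model martingale) --- the leading term $\tfrac{N}{2}\overline{f_{1}}\cdot\overline{f_{0}}$ should emerge from an expected meeting time that scales like $N$, while the short-range corrections ($-\tfrac12\overline{f_{10}}$, $-\tfrac12\overline{f_{1}f_{0}}$, and the extra factor $\tfrac1k$ in \eq{integralThree}) should come from the first one or two steps, where $P$ and $Q$ may already coincide or be nearest-neighbor-correlated. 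Averaging these per-pair identities over the starting laws of $(P,Q)$ from the first step and simplifying with $f_{1}+f_{0}=1$ and $f_{\ast 0}=f_{10}+f_{00}$ would then yield the three stated formulas.

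I expect the main obstacle to be this third step: rigorously evaluating the killed-pair Green's function and isolating exactly which initial local frequencies survive. The delicate points are controlling the $N$-dependence through the meeting time and bookkeeping the short-range corrections so that the final answer depends on the initial configuration only through $\overline{f_{1}},\overline{f_{0}},\overline{f_{10}},\overline{f_{1}f_{0}}$. This is precisely where the regularity of $G$ and the reversibility of the walk (uniform stationary distribution) are essential, and where the coalescing-random-walk machinery underlying Theorem~3.8 of \citep{chen:AAP:2013} does most of the work.
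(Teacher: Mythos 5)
Your first two steps are sound and essentially reproduce the paper's own setup: the duality reduction of \eq{voterDuality}, the identification of the three integrands with specific pair laws (your $(P,Q)$ pairs agree, after reversibility, with the paper's $(X_0,X_1)$, $(X_0,X_2)$, $(X_0,X_3)$), and your killed-pair representation $J(u,v)=\mathbb{E}\bigl[\int_0^{\tau_{uv}}\xi(\widetilde{B}_t^{u})\widehat{\xi}(\widetilde{B}_t^{v})\,dt\bigr]$ is a correct consequence of duality, since coalescing walks agree with independent walks up to their meeting time and the integrand vanishes afterwards. The genuine gap is your third step, which is where the lemma actually lives, and the strategy you sketch there would not deliver it. The three formulas are \emph{exact} identities for every finite $N$ and every $\xi$, not leading-order asymptotics. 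A decomposition of $J$ by the strong Markov property at $\tau_{uv}$ gives $\mathbb{E}[\tau_{uv}]\,\overline{f_{1}}\cdot\overline{f_{0}}$ plus two centered Green's-function terms of the free product chain, and both the expected meeting time and those Green's functions depend on the global geometry of $G$ (effectively on its spectrum), not only on $\overline{f_{1}},\overline{f_{0}},\overline{f_{10}},\overline{f_{1}f_{0}}$. Proving that all of this graph-dependence cancels exactly is equivalent to proving the lemma itself, so your reduction leaves the entire difficulty intact; the phrases ``should emerge'' and ``scales like $N$'' are doing all of the work.

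What the paper does instead, and what your outline is missing, is two concrete devices. First, \eq{integralOne} does not come from meeting-time considerations at all: it follows from the exact voter-model identity \eq{voterModelEquation} (Theorem 3.1 of \citep{chen:AIHPPS:2016}; in essence, $\overline{f_{1}}$ is a martingale and the drift of $\overline{f_{1}}\,\overline{f_{0}}$ is carried by discordant edges), and then letting $t\rightarrow\infty$, where the left-hand side vanishes by absorption at monomorphic states. Second, \eq{integralTwo} and \eq{integralThree} come not from stopping at the meeting time but from the first-jump renewal equation \eq{integralEquation} for the coalescing pair, valid only off the diagonal; integrating it over $t\in\left(0,\infty\right)$ and averaging over the stationary discrete walk, reversibility converts the neighbor-sums into adjacent pair laws and yields exact recursions such as $\int_{0}^{\infty}\mathbb{E}_{\pi}\bigl[\xi(B_{t}^{X_{0}})\widehat{\xi}(B_{t}^{X_{2}})\bigr]\,dt=\int_{0}^{\infty}\mathbb{E}_{\pi}\bigl[\xi(B_{t}^{X_{0}})\widehat{\xi}(B_{t}^{X_{1}})\bigr]\,dt-\tfrac{1}{2}\overline{f_{10}}\left(\xi\right)$, which chain together to give the correction terms. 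The issue you flag but do not resolve --- that $P$ and $Q$ may already coincide --- is precisely what must be handled for the third integral: $\mathbb{P}_{\pi}\left(X_{0}=X_{2}\right)=1/k$ on a $k$-regular graph, and restricting the renewal equation to $\left\{X_{0}\neq X_{2}\right\}$ while adding back the diagonal contribution is exactly what produces the $\tfrac{1}{k}$ terms in \eq{integralThree}. (Also, the heavy lifting here is not Theorem 3.8 of \citep{chen:AAP:2013}, which only supplies the first-order expansion used in Lemma \ref{lem:firstLemma}; the integrals themselves require the identity from \citep{chen:AIHPPS:2016} plus the renewal-plus-reversibility argument above.) Without these ingredients, or exact substitutes for them, your proposal sets up the right objects but does not prove the lemma.
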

\begin{proof}
For any configuration, $\xi$, and any $t>0$,
\begin{linenomath}
\begin{align}\label{eq:voterModelEquation}
\mathbb{E}_{\xi}^{0}\Big[\overline{f_{1}}\left(\xi_{t}\right)\overline{f_{0}}\left(\xi_{t}\right)\Big] &= \overline{f_{1}}\left(\xi\right)\overline{f_{0}}\left(\xi\right) - \frac{2}{N}\int_{0}^{t}\mathbb{E}_{\xi}^{0}\Big[\overline{f_{10}}\left(\xi_{s}\right)\Big]\,ds
\end{align}
\end{linenomath}
by Theorem 3.1 in \citep{chen:AIHPPS:2016}. See also Section 3 in that reference for discussions and related results of \eq{voterModelEquation} in terms of coalescing random walks. Moreover, for any vertices $x$ and $y$ with $x\neq y$, we have
\begin{linenomath}
\begin{align}\label{eq:integralEquation}
\mathbb{E}\left[\xi\left(B_{t}^{x}\right)\widehat{\xi}\left(B_{t}^{y}\right)\right] &= e^{-2t}\xi\left(x\right)\widehat{\xi}\left(y\right) \nonumber \\
&\quad +\int_{0}^{t} e^{-2\left(t-s\right)}\left(\sum_{z\in V\,:\,z\sim x}\frac{1}{k}\mathbb{E}\left[\xi\left(B_{s}^{z}\right)\widehat{\xi}\left(B_{s}^{y}\right)\right] + \sum_{z\in V\,:\,z\sim y}\frac{1}{k}\mathbb{E}\left[\xi\left(B_{s}^{x}\right)\widehat{\xi}\left(B_{s}^{z}\right)\right]\right)\,ds ,
\end{align}
\end{linenomath}
which is obtained by considering whether the first epoch time of the bivariate Markov chain $\left(B^{x},B^{y}\right)$ occurs before time $t$ or not. Notice that \eq{integralEquation} is false if $x=y$ since the left-hand side vanishes but the integral term on the right-hand side is, in general, nonzero. This fact needs to be kept in mind when \eq{integralEquation} is applied. Furthermore, using the duality of \eq{voterDuality}, the voter-model integrals in question are
\begin{linenomath}
\begin{subequations}
\begin{align}
\int_{0}^{\infty}\mathbb{E}_{\xi}^{0}\Big[\overline{f_{10}}\left(\xi_{t}\right)\Big]\,dt &= \int_{0}^{\infty}\mathbb{E}_{\pi}\left[\xi\left(B_{t}^{X_{0}}\right)\widehat{\xi}\left(B_{t}^{X_{1}}\right)\right]\,dt ; \label{eq:integralOnePrime} \\
\int_{0}^{\infty}\mathbb{E}_{\xi}^{0}\Big[\overline{f_{1}f_{0}}\left(\xi_{t}\right)\Big]\,dt &= \int_{0}^{\infty}\mathbb{E}_{\pi}\left[\xi\left(B_{t}^{X_{0}}\right)\widehat{\xi}\left(B_{t}^{X_{2}}\right)\right]\,dt ; \label{eq:integralTwoPrime} \\
\int_{0}^{\infty}\mathbb{E}_{\xi}^{0}\Big[\overline{f_{1}f_{\ast 0}}\left(\xi_{t}\right)\Big]\,dt &=\int_{0}^{\infty}\mathbb{E}_{\pi}\left[\xi\left(B_{t}^{X_{0}}\right)\widehat{\xi}\left(B_{t}^{X_{3}}\right)\right]\,dt . \label{eq:integralThreePrime}
\end{align}
\end{subequations}
\end{linenomath}
We are now in a position to establish \eq{secondLemmaCalculations}. By letting $t\rightarrow\infty$ in \eq{voterModelEquation}, we obtain \eq{integralOne} since $\overline{f_{1}}\cdot\overline{f_{0}}$ vanishes at monomorphic configurations. Since the graph has no self-loops, we have $X_{0}\neq X_{1}$ almost surely, thus, by \eq{integralEquation} and the reversibility of the chain $\left(X_{n}\right)_{n\geqslant 0}$ under $\mathbb{P}_{\pi}$, we have
\begin{linenomath}
\begin{align}
\mathbb{E}_{\pi}\left[\xi\left(B_{t}^{X_{0}}\right)\widehat{\xi}\left(B_{t}^{X_{1}}\right)\right] &= e^{-2t}\mathbb{E}_{\pi}\left[\xi\left(X_{0}\right)\widehat{\xi}\left(X_{1}\right)\right] \nonumber \\
&\quad +\int_{0}^{t} e^{-2\left(t-s\right)} \left( \mathbb{E}_{\pi}\left[\xi\left(B_{s}^{Y_{1}}\right)\widehat{\xi}\left(B_{s}^{X_{1}}\right)\right] + \mathbb{E}_{\pi}\left[\xi\left(B_{s}^{X_{0}}\right)\widehat{\xi}\left(B_{s}^{X_{2}}\right)\right] \right) \,ds \nonumber \\
&= e^{-2t}\mathbb{E}_{\pi}\left[\xi\left(X_{0}\right)\widehat{\xi}\left(X_{1}\right)\right] + \int_{0}^{t} 2e^{-2\left(t-s\right)}\mathbb{E}_{\pi}\left[\xi\left(B_{s}^{X_{0}}\right)\widehat{\xi}\left(B_{s}^{X_{2}}\right)\right]\,ds .
\end{align}
\end{linenomath}
Integrating both sides of this equation with respect to $t$ over $\left(0,\infty\right)$ implies that
\begin{linenomath}
\begin{align}
\int_{0}^{\infty}\mathbb{E}_{\pi}\left[\xi\left(B_{t}^{X_{0}}\right)\widehat{\xi}\left(B_{t}^{X_{2}}\right)\right]\,dt &= \int_{0}^{\infty}\mathbb{E}_{\pi}\left[\xi\left(B_{t}^{X_{0}}\right)\widehat{\xi}\left(B_{t}^{X_{1}}\right)\right]\,dt - \frac{\mathbb{E}_{\pi}\left[\xi\left(X_{0}\right)\widehat{\xi}\left(X_{1}\right)\right]}{2} ,
\end{align}
\end{linenomath}
which, by Eqs. (\ref{eq:integralOne}), (\ref{eq:integralOnePrime}), and (\ref{eq:integralTwoPrime}), gives \eq{integralTwo}.

The proof of the one remaining equation, \eq{integralThree}, is similar except that we have to take into account the fact that $\mathbb{P}_{\pi}\left(X_{0}=X_{2}\right) >0$ when applying \eq{integralEquation}. By reversibility, $\left(Y_{1},X_{0},X_{1},X_{2}\right)$ and $\left(X_{3},X_{2},X_{1},X_{0}\right)$ have the same distribution under $\mathbb{P}_{\pi}$. Therefore, by \eq{integralEquation}, it follows that
\begin{linenomath}
\begin{align}
\mathbb{E}_{\pi} &\left[\xi\left(B_{t}^{X_{0}}\right)\widehat{\xi}\left(B_{t}^{X_{2}}\right)\right] \nonumber \\
&= e^{-2t}\mathbb{E}_{\pi}\left[\xi\left(X_{0}\right)\widehat{\xi}\left(X_{2}\right)\right] \nonumber \\
&\quad +\int_{0}^{t}e^{-2\left(t-s\right)}\left(\mathbb{E}_{\pi}\left[\xi\left(B_{s}^{Y_{1}}\right)\widehat{\xi}\left(B_{s}^{X_{2}}\right)\mathds{1}_{\left\{X_{0}\neq X_{2}\right\}}\right] + \mathbb{E}_{\pi}\left[\xi\left(B_{s}^{X_{0}}\right)\widehat{\xi}\left(B_{s}^{X_{3}}\right)\mathds{1}_{\left\{X_{0}\neq X_{2}\right\}}\right]\right)\,ds \nonumber \\
&= e^{-2t}\mathbb{E}_{\pi}\left[\xi\left(X_{0}\right)\widehat{\xi}\left(X_{2}\right)\right] \nonumber \\
&\quad +\int_{0}^{t}e^{-2\left(t-s\right)}\left(\mathbb{E}_{\pi}\left[\xi\left(B_{s}^{X_{3}}\right)\widehat{\xi}\left(B_{s}^{X_{0}}\right)\mathds{1}_{\left\{X_{0}\neq X_{2}\right\}}\right] + \mathbb{E}_{\pi}\left[\xi\left(B_{s}^{X_{0}}\right)\widehat{\xi}\left(B_{s}^{X_{3}}\right)\mathds{1}_{\left\{X_{0}\neq X_{2}\right\}}\right]\right)\,ds .
\end{align}
\end{linenomath}
Integrating both sides of this equation with respect to $t$ over $\left(0,\infty\right)$ yields
\begin{linenomath}
\begin{align}\label{eq:0toInfty}
\int_{0}^{\infty}\mathbb{E}_{\pi}\left[\xi\left(B_{t}^{X_{0}}\right)\widehat{\xi}\left(B_{t}^{X_{2}}\right)\right]\,dt &= \frac{\mathbb{E}_{\pi}\left[\xi\left(X_{0}\right)\widehat{\xi}\left(X_{2}\right)\right]}{2} + \frac{1}{2}\int_{0}^{\infty}\mathbb{E}_{\pi}\left[\xi\left(B_{t}^{X_{3}}\right)\widehat{\xi}\left(B_{t}^{X_{0}}\right)\mathds{1}_{\left\{X_{0}\neq X_{2}\right\}}\right]\,dt \nonumber \\
&\quad + \frac{1}{2}\int_{0}^{\infty}\mathbb{E}_{\pi}\left[\xi\left(B_{t}^{X_{0}}\right)\widehat{\xi}\left(B_{t}^{X_{3}}\right)\mathds{1}_{\left\{X_{0}\neq X_{2}\right\}}\right]\,dt ,
\end{align}
\end{linenomath}
from which we obtain
\begin{linenomath}
\begin{align}\label{eq:finalEquation}
\int_{0}^{\infty} &\mathbb{E}_{\pi}\left[\xi\left(B_{t}^{X_{0}}\right)\widehat{\xi}\left(B_{t}^{X_{3}}\right)\right]\,dt \nonumber \\
&= \frac{1}{2}\int_{0}^{\infty}\mathbb{E}_{\pi}\left[\xi\left(B_{t}^{X_{3}}\right)\widehat{\xi}\left(B_{t}^{X_{0}}\right)\right]\,dt + \frac{1}{2}\int_{0}^{\infty}\mathbb{E}_{\pi}\left[\xi\left(B_{t}^{X_{0}}\right)\widehat{\xi}\left(B_{t}^{X_{3}}\right)\right]\,dt \nonumber \\
&= \frac{1}{2}\int_{0}^{\infty}\mathbb{E}_{\pi}\left[\xi\left(B_{t}^{X_{3}}\right)\widehat{\xi}\left(B_{t}^{X_{0}}\right)\mathds{1}_{\left\{X_{0}\neq X_{2}\right\}}\right]\,dt + \frac{1}{2}\int_{0}^{\infty}\mathbb{E}_{\pi}\left[\xi\left(B_{t}^{X_{3}}\right)\widehat{\xi}\left(B_{t}^{X_{0}}\right)\mathds{1}_{\left\{X_{0}=X_{2}\right\}}\right]\,dt \nonumber \\
&\quad +\frac{1}{2}\int_{0}^{\infty}\mathbb{E}_{\pi}\left[\xi\left(B_{t}^{X_{0}}\right)\widehat{\xi}\left(B_{t}^{X_{3}}\right)\mathds{1}_{\left\{X_{0}\neq X_{2}\right\}}\right]\,dt + \frac{1}{2}\int_{0}^{\infty}\mathbb{E}_{\pi}\left[\xi\left(B_{t}^{X_{0}}\right)\widehat{\xi}\left(B_{t}^{X_{3}}\right)\mathds{1}_{\left\{X_{0}=X_{2}\right\}}\right]\,dt \nonumber \\
&= \int_{0}^{\infty}\mathbb{E}_{\pi}\left[\xi\left(B_{t}^{X_{0}}\right)\widehat{\xi}\left(B_{t}^{X_{2}}\right)\right]\,dt - \frac{\mathbb{E}_{\pi}\left[\xi\left(X_{0}\right)\widehat{\xi}\left(X_{2}\right)\right]}{2} \nonumber \\
&\quad +\frac{1}{2}\int_{0}^{\infty}\mathbb{E}_{\pi}\left[\xi\left(B_{t}^{X_{3}}\right)\widehat{\xi}\left(B_{t}^{X_{2}}\right)\mathds{1}_{\left\{X_{0}=X_{2}\right\}}\right] \,dt + \frac{1}{2}\int_{0}^{\infty} \mathbb{E}_{\pi}\left[\xi\left(B_{t}^{X_{2}}\right)\widehat{\xi}\left(B_{t}^{X_{3}}\right)\mathds{1}_{\left\{X_{0}=X_{2}\right\}}\right] \,dt \nonumber \\
&= \int_{0}^{\infty}\mathbb{E}_{\pi}\left[\xi\left(B_{t}^{X_{0}}\right)\widehat{\xi}\left(B_{t}^{X_{2}}\right)\right]\,dt - \frac{\mathbb{E}_{\pi}\left[\xi\left(X_{0}\right)\widehat{\xi}\left(X_{2}\right)\right]}{2} \nonumber \\
&\quad +\frac{1}{2k}\int_{0}^{\infty}\mathbb{E}_{\pi}\left[\xi\left(B_{t}^{X_{1}}\right)\widehat{\xi}\left(B_{t}^{X_{0}}\right)\right] \,dt + \frac{1}{2k}\int_{0}^{\infty} \mathbb{E}_{\pi}\left[\xi\left(B_{t}^{X_{0}}\right)\widehat{\xi}\left(B_{t}^{X_{1}}\right)\right] \,dt \nonumber \\
&=  \int_{0}^{\infty}\mathbb{E}_{\pi}\left[\xi\left(B_{t}^{X_{0}}\right)\widehat{\xi}\left(B_{t}^{X_{2}}\right)\right]\,dt - \frac{\mathbb{E}_{\pi}\left[\xi\left(X_{0}\right)\widehat{\xi}\left(X_{2}\right)\right]}{2} + \frac{1}{k}\int_{0}^{\infty} \mathbb{E}_{\pi}\left[\xi\left(B_{t}^{X_{0}}\right)\widehat{\xi}\left(B_{t}^{X_{1}}\right)\right] \,dt .
\end{align}
\end{linenomath}
The first and last equalities follow from reversibility, the third equality from \eq{0toInfty}, and the fourth equality from the Markov property of $\left(X_{n}\right)_{n\geqslant 0}$ at $n=0,2$ and the fact that $\mathbb{P}_{x}\left(X_{0}=X_{2}\right) =1/k$ since the graph is regular. \eq{integralThree} then follows from Eqs. (\ref{eq:integralOne}), (\ref{eq:integralTwo}), (\ref{eq:integralThreePrime}), and (\ref{eq:finalEquation}).
\end{proof}

We are now in a position to prove the first of our main results:
\begin{theorem}\label{thm:dbTheorem}
In the donation game, for any configuration, $\xi$, we have the following expansion as $w\rightarrow 0^{+}$:
\begin{linenomath}
\begin{align}\label{eq:dbTheoremEquation}
\rho_{\xi ,\mathbf{C}}\left(w\right) = \rho_{\xi ,\mathbf{C}}\left(0\right) + \frac{w}{2}\Big\{ &b\left[N\overline{f_{1}}\left(\xi\right)\overline{f_{0}}\left(\xi\right) -k\overline{f_{10}}\left(\xi\right) - k\overline{f_{1}f_{0}}\left(\xi\right)\right] \nonumber \\
&- c\left[kN\overline{f_{1}}\left(\xi\right)\overline{f_{0}}\left(\xi\right) -k\overline{f_{10}}\left(\xi\right)\right] \Big\} + O\left(w^{2}\right) .
\end{align}
\end{linenomath}
\end{theorem}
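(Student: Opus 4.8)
The plan is to specialize \lem{firstLemma} to the donation game and then evaluate the resulting voter-model integrals with \lem{secondLemma}. Reading the generic entries off the donation matrix \eq{donationMatrix}, one has $a=b-c$, the $A$-versus-$B$ entry equal to $-c$, the $B$-versus-$A$ entry equal to $b$, and $d=0$, where $b$ and $c$ now denote the donation-game benefit and cost. Substituting these into \eq{firstLemmaEquation} and grouping the four integrals by $b$ and by $c$, I expect the coefficient of $c$ to be $-ck\int_{0}^{\infty}\mathbb{E}_{\xi}^{0}[\overline{f_{0}f_{11}}+\overline{f_{0}f_{10}}]\,dt$ and the coefficient of $b$ to be $bk\int_{0}^{\infty}\mathbb{E}_{\xi}^{0}[\overline{f_{0}f_{11}}-\overline{f_{1}f_{01}}]\,dt$ (the $d=0$ integral drops out). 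The entire argument then reduces to rewriting these two bracketed integrands in terms of the three functionals $\overline{f_{10}}$, $\overline{f_{1}f_{0}}$, and $\overline{f_{1}f_{\ast 0}}$ whose integrals \lem{secondLemma} already supplies.

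For this rewriting I would use only the pointwise completeness identities valid on a $k$-regular graph, namely $f_{11}+f_{10}=f_{1}$, $f_{01}+f_{00}=f_{0}$, and $f_{1}+f_{0}=1$. For the $c$-term these give $f_{0}f_{11}+f_{0}f_{10}=f_{0}(f_{11}+f_{10})=f_{0}f_{1}$, so the coefficient of $c$ collapses to $-ck\int_{0}^{\infty}\mathbb{E}_{\xi}^{0}[\overline{f_{1}f_{0}}]\,dt$, evaluated directly by \eq{integralTwo}. For the $b$-term, substituting $f_{11}=f_{1}-f_{10}$ and $f_{01}=f_{0}-f_{00}$ gives $f_{0}f_{11}-f_{1}f_{01}=f_{1}f_{00}-f_{0}f_{10}$; then using $f_{1}f_{00}=f_{1}f_{\ast 0}-f_{1}f_{10}$ (because $f_{\ast 0}=f_{10}+f_{00}$) and $f_{0}f_{10}=f_{10}-f_{1}f_{10}$ (because $f_{0}=1-f_{1}$), the parasitic $f_{1}f_{10}$ cancels and one is left with $f_{1}f_{\ast 0}-f_{10}$. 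Thus the coefficient of $b$ becomes $bk\int_{0}^{\infty}\mathbb{E}_{\xi}^{0}[\overline{f_{1}f_{\ast 0}}-\overline{f_{10}}]\,dt$, which \eq{integralThree} together with \eq{integralOne} evaluate.

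What remains is pure bookkeeping. Inserting the \lem{secondLemma} values, the $c$-term should give $-\tfrac{ck}{2}(N\overline{f_{1}}\,\overline{f_{0}}-\overline{f_{10}})$, and in the $b$-term the $\tfrac12 N\overline{f_{1}}\,\overline{f_{0}}$ from \eq{integralOne} cancels against the constant part of $N(1+\tfrac1k)\overline{f_{1}}\,\overline{f_{0}}/2$ in \eq{integralThree}, leaving $\tfrac1k\cdot\tfrac12 N\overline{f_{1}}\,\overline{f_{0}}-\tfrac12\overline{f_{10}}-\tfrac12\overline{f_{1}f_{0}}$; multiplying by $bk$ then clears the $1/k$ and yields $\tfrac{b}{2}(N\overline{f_{1}}\,\overline{f_{0}}-k\overline{f_{10}}-k\overline{f_{1}f_{0}})$. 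Carrying the overall $w$ through and factoring out $w/2$ reproduces \eq{dbTheoremEquation} exactly, with the $O(w^{2})$ remainder inherited verbatim from \lem{firstLemma}.

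I would flag that the only non-mechanical step is the decision to route the $b$-term through the combination $f_{\ast 0}=f_{10}+f_{00}$ rather than through $\overline{f_{1}f_{00}}$ and $\overline{f_{0}f_{10}}$ individually: this is precisely the functional that \lem{secondLemma} was set up to handle, and it is also the unique grouping in which the $\overline{f_{1}f_{10}}$ contributions cancel. Once this alignment is recognized, no further analytic input is required, so I anticipate no real obstacle beyond careful tracking of the donation-game sign conventions, in particular the fact that the generic $A$-versus-$B$ entry is $-c$ rather than $c$.
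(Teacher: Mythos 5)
Your proposal is correct and follows the paper's own route exactly: specialize \lem{firstLemma} to the donation game, reduce the resulting integrands to $\overline{f_{1}f_{0}}$ and $\overline{f_{1}f_{\ast 0}}-\overline{f_{10}}$, and evaluate via \lem{secondLemma}. The pointwise identities you spell out (e.g.\ $f_{0}f_{11}-f_{1}f_{01}=f_{1}f_{\ast 0}-f_{10}$) are precisely the ``simple calculation'' the paper compresses into its Eq.~(\ref{eq:dbProofEquation}), and your sign conventions and final bookkeeping match \eq{dbTheoremEquation}.
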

\begin{proof}
By \lem{firstLemma}, it suffices to obtain the coefficient of $w$, i.e. the first order term, on the right-hand side of \eq{firstLemmaEquation}. Since the game under consideration is the donation game, a simple calculation gives
\begin{linenomath}
\begin{align}\label{eq:dbProofEquation}
\int_{0}^{\infty}\mathbb{E}_{\xi}^{0}\Big[\overline{D}\left(\xi_{t}\right)\Big]\,dt &= bk\left(\int_{0}^{\infty}\mathbb{E}_{\xi}^{0}\Big[\overline{f_{1}f_{\ast 0}}\left(\xi_{t}\right)\Big]\,dt - \int_{0}^{\infty}\mathbb{E}_{\xi}^{0}\Big[\overline{f_{10}}\left(\xi_{t}\right)\Big]\,dt\right) - ck\int_{0}^{\infty}\mathbb{E}_{\xi}^{0}\Big[\overline{f_{1}f_{0}}\left(\xi_{t}\right)\Big]\,dt .
\end{align}
\end{linenomath}
Therefore, \eq{dbTheoremEquation} follows from the calculations of \lem{secondLemma}, which completes the proof.
\end{proof}

From \thm{dbTheorem}, we see that, for small $w>0$,
\begin{linenomath}
\begin{align}\label{eq:dbTheoremConclusion}
\rho_{\xi ,\mathbf{C}}\left(w\right) > \rho_{\xi ,\mathbf{C}}\left(0\right) &\iff \frac{b}{c} > \frac{k\left(N\overline{f_{1}}\left(\xi\right)\overline{f_{0}}\left(\xi\right) -\overline{f_{10}}\left(\xi\right)\right)}{N\overline{f_{1}}\left(\xi\right)\overline{f_{0}}\left(\xi\right) -k\overline{f_{10}}\left(\xi\right) - k\overline{f_{1}f_{0}}\left(\xi\right)} =: \left(\frac{b}{c}\right)_{\xi}^{\ast} ,
\end{align}
\end{linenomath}
which gives the critical benefit-to-cost ratio of \eq{ratioGeneral}.

\subsubsection{Structure coefficients}

We now turn to a generalization of the critical benefit-to-cost ratio for arbitrary $2\times 2$ games in which the payoff matrix is given by \eq{generic2by2}. Our main result is the following:

\begin{theorem}\label{thm:dbSigmaTheorem}
$\rho_{\xi ,\mathbf{A}}\left(w\right) > \rho_{\widehat{\xi},\mathbf{B}}\left(w\right)$ for all sufficiently small $w>0$ if and only if
\begin{linenomath}
\begin{align}\label{eq:sigmaTheoremEquation}
\left(a-d\right) &\left[N\overline{f_{1}}\left(\xi\right)\overline{f_{0}}\left(\xi\right)\left(1+\frac{1}{k}\right) -2\overline{f_{10}}\left(\xi\right) -\overline{f_{1}f_{0}}\left(\xi\right)\right] \nonumber \\
&+\left(b-c\right)\left[N\overline{f_{1}}\left(\xi\right)\overline{f_{0}}\left(\xi\right)\left(1-\frac{1}{k}\right) + \overline{f_{1}f_{0}}\left(\xi\right)\right] > 0 .
\end{align}
\end{linenomath}
\end{theorem}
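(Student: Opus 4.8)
The plan is to reduce the two-sided comparison in \thm{dbSigmaTheorem} to the sign of the first-order-in-$w$ term, and then to evaluate that term using \thm{dbTheorem} applied to both $\xi$ and its conjugate $\widehat{\xi}$, so that no coalescing-walk computation beyond \lem{secondLemma} is needed. Since \lem{firstLemma} is stated for the generic payoff matrix (with $\mathbf{C}$ the all-$A$ state), writing $I_{011}=\int_0^\infty\mathbb{E}_\xi^0[\overline{f_0 f_{11}}(\xi_t)]\,dt$ and likewise $I_{010},I_{101},I_{100}$ for the integrals of $\overline{f_0 f_{10}},\overline{f_1 f_{01}},\overline{f_1 f_{00}}$, it gives, with $n=\#\{x:\xi(x)=1\}$,
\[
\rho_{\xi,\mathbf{A}}(w)=\frac{n}{N}+wk\big(aI_{011}+bI_{010}-cI_{101}-dI_{100}\big)+O(w^2).
\]
For the second term I would use that relabeling $A\leftrightarrow B$ is an exact symmetry of the weakly-selected DB dynamics: it carries $\widehat{\xi}$ to $\xi$, fixation of $B$ to fixation of $A$, and the matrix $(a,b,c,d)$ to $(d,c,b,a)$, so $\rho_{\widehat{\xi},\mathbf{B}}(w)=\rho^{\mathrm{sw}}_{\xi,\mathbf{A}}(w)$. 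As the neutral dynamics and the local-frequency functionals are label-blind, \lem{firstLemma} applied to the swapped game reuses the same integrals, giving $\rho_{\widehat{\xi},\mathbf{B}}(w)=\tfrac nN+wk(dI_{011}+cI_{010}-bI_{101}-aI_{100})+O(w^2)$. Both neutral values equal $n/N$ (the conjugate carries exactly $n$ $B$-players), so subtraction cancels the leading terms and leaves
\[
\rho_{\xi,\mathbf{A}}(w)-\rho_{\widehat{\xi},\mathbf{B}}(w)=wk\big[(a-d)S_1+(b-c)S_2\big]+O(w^2),
\]
with $S_1:=I_{011}+I_{100}$ and $S_2:=I_{010}+I_{101}$. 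As in the passage from \thm{dbTheorem} to \eq{dbTheoremConclusion}, the comparison for small $w>0$ is then governed by the sign of $(a-d)S_1+(b-c)S_2$.

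It remains to evaluate $S_1,S_2$. Abbreviating $P=N\overline{f_1}\,\overline{f_0}$, $Q=\overline{f_{10}}$, $R=\overline{f_1 f_0}$, the donation game is the case $(a,b,c,d)=(b-c,-c,b,0)$, so the first-order coefficient from \lem{firstLemma} must agree with the coefficient in \eq{dbTheoremEquation} for all donation parameters. Matching the coefficients of $b$ and of $c$ yields $I_{011}-I_{101}=\tfrac{P}{2k}-\tfrac Q2-\tfrac R2$ and $I_{011}+I_{010}=\tfrac12(P-Q)$. Since the $d$-entry vanishes in the donation game, $I_{100}$ never appears, so these two relations underdetermine $S_1,S_2$.

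The key step is to supply the missing relation by applying \thm{dbTheorem} to $\widehat{\xi}$. The index swap gives the pointwise identities $\overline{f_0 f_{11}}(\widehat{\eta})=\overline{f_1 f_{00}}(\eta)$ and $\overline{f_0 f_{10}}(\widehat{\eta})=\overline{f_1 f_{01}}(\eta)$; together with the relabeling symmetry of the neutral voter model, so that $\mathbb{E}^0_{\widehat{\xi}}[\Phi(\xi_t)]=\mathbb{E}^0_\xi[\Phi(\widehat{\xi_t})]$, they give $I_{011}(\widehat{\xi})=I_{100}(\xi)$ and $I_{010}(\widehat{\xi})=I_{101}(\xi)$. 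Reversibility of the walk ($\overline{f_{10}}=\overline{f_{01}}$) together with the conjugation-invariance of $P$ and $R$ keeps the right-hand sides fixed, so the $b$-coefficient relation applied to $\widehat{\xi}$ reads $I_{100}-I_{010}=\tfrac{P}{2k}-\tfrac Q2-\tfrac R2$. Adding this to $I_{011}-I_{101}$ gives $S_1-S_2=\tfrac Pk-Q-R$, while the $c$-relation for $\xi$ and for $\widehat{\xi}$ gives $S_1+S_2=P-Q$. Solving, $2S_1=P(1+\tfrac1k)-2Q-R$ and $2S_2=P(1-\tfrac1k)+R$, which are exactly the bracketed expressions in \eq{sigmaTheoremEquation}. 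Hence the first-order coefficient equals $\tfrac k2$ times the left-hand side of \eq{sigmaTheoremEquation}, and since $k/2>0$ the stated criterion follows.

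I expect the main obstacle to be the bookkeeping of the conjugation step: one must verify that $A\leftrightarrow B$ relabeling is an exact symmetry at every $w$ (so that $\rho_{\widehat{\xi},\mathbf{B}}=\rho^{\mathrm{sw}}_{\xi,\mathbf{A}}$ holds identically), and apply the identities $I_\bullet(\widehat{\xi})=I_\bullet(\xi)$ with the correct index correspondence. A minor point, treated exactly as in \eq{dbTheoremConclusion}, is that the equivalence is read from the linear-in-$w$ term, which determines the strict comparison for all sufficiently small $w>0$ whenever the coefficient is nonzero.
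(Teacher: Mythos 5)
Your proposal is correct and follows essentially the same route as the paper's proof: both reduce the comparison to the sign of $k\left[\left(a-d\right)S_{1}+\left(b-c\right)S_{2}\right]$ with $S_{1}=I_{011}+I_{100}$ and $S_{2}=I_{010}+I_{101}$ by exploiting the conjugation symmetry of the neutral voter model, and both then pin down $S_{1}$ and $S_{2}$ by specializing the donation-game expansion of \thm{dbTheorem} together with the conjugation identities. The only cosmetic differences are that the paper writes $\rho_{\widehat{\xi},\mathbf{B}}\left(w\right)=1-\rho_{\widehat{\xi},\mathbf{A}}\left(w\right)$ and adds the two first-order expansions (rather than expanding $\rho_{\widehat{\xi},\mathbf{B}}$ directly through your $A\leftrightarrow B$ relabeling and subtracting), and it extracts $S_{1}$ and $S_{2}$ by substituting $b=-c$ and $b=c$ into the symmetrized expression rather than solving your $2\times 2$ linear system; these are equivalent computations.
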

\begin{proof}
By the neutrality of the voter model, we have $\rho_{\xi ,\mathbf{A}}\left(0\right) +\rho_{\widehat{\xi},\mathbf{A}}\left(0\right) =1$, thus
\begin{linenomath}
\begin{align}
\rho_{\xi ,\mathbf{A}}\left(w\right) > \rho_{\widehat{\xi},\mathbf{B}}\left(w\right) &\iff \rho_{\xi ,\mathbf{A}}\left(w\right) > 1 - \rho_{\widehat{\xi},\mathbf{A}}\left(w\right) \nonumber \\
&\iff \Big( \rho_{\xi ,\mathbf{A}}\left(w\right) - \rho_{\xi ,\mathbf{A}}\left(0\right) \Big) + \Big( \rho_{\widehat{\xi} ,\mathbf{A}}\left(w\right) - \rho_{\widehat{\xi} ,\mathbf{A}}\left(0\right) \Big) > 0
\end{align}
\end{linenomath}
By the first-order expansion of \eq{firstOrderExpansion}, it follows that, for small $w>0$,
\begin{linenomath}
\begin{align}
\rho_{\xi ,\mathbf{A}}\left(w\right) > \rho_{\widehat{\xi},\mathbf{B}}\left(w\right) &\iff \int_{0}^{\infty}\mathbb{E}_{\xi}^{0}\Big[\overline{D}\left(\xi_{t}\right)\Big]\,dt + \int_{0}^{\infty}\mathbb{E}_{\widehat{\xi}}^{0}\Big[\overline{D}\left(\xi_{t}\right)\Big]\,dt > 0 .
\end{align}
\end{linenomath}
By Eqs. (\ref{eq:firstLemmaEquation}) and (\ref{eq:firstOrderExpansion}) and the neutrality of the voter model, we have
\begin{linenomath}
\begin{align}\label{eq:sigmaExpansion}
\int_{0}^{\infty} &\mathbb{E}_{\xi}^{0}\Big[\overline{D}\left(\xi_{t}\right)\Big]\,dt + \int_{0}^{\infty}\mathbb{E}_{\widehat{\xi}}^{0}\Big[\overline{D}\left(\xi_{t}\right)\Big]\,dt \nonumber \\
&= \left(a-d\right) k\left(\int_{0}^{\infty}\mathbb{E}_{\xi}^{0}\Big[\overline{f_{0}f_{11}}\left(\xi_{t}\right)\Big]\,dt +\int_{0}^{\infty}\mathbb{E}_{\xi}^{0}\Big[\overline{f_{1}f_{00}}\left(\xi_{t}\right)\Big]\,dt\right) \nonumber \\
&\quad + \left(b-c\right) k\left(\int_{0}^{\infty}\mathbb{E}_{\xi}^{0}\Big[\overline{f_{0}f_{10}}\left(\xi_{t}\right)\Big]\,dt +\int_{0}^{\infty}\mathbb{E}_{\xi}^{0}\Big[\overline{f_{1}f_{01}}\left(\xi_{t}\right)\Big]\,dt\right) ,
\end{align}
\end{linenomath}
and all that remains is to determine the coefficients of $a-d$ and $b-c$ in \eq{sigmaExpansion}. By considering \eq{sigmaExpansion} with the payoff values of the donation game rather than an arbitrary $2\times 2$ game, we obtain
\begin{linenomath}
\begin{align}
\int_{0}^{\infty} &\mathbb{E}_{\xi}^{0}\Big[\overline{D}\left(\xi_{t}\right)\Big]\,dt + \int_{0}^{\infty}\mathbb{E}_{\widehat{\xi}}^{0}\Big[\overline{D}\left(\xi_{t}\right)\Big]\,dt \nonumber \\
&= \left(b-c\right) k\left(\int_{0}^{\infty}\mathbb{E}_{\xi}^{0}\Big[\overline{f_{0}f_{11}}\left(\xi_{t}\right)\Big]\,dt +\int_{0}^{\infty}\mathbb{E}_{\xi}^{0}\Big[\overline{f_{1}f_{00}}\left(\xi_{t}\right)\Big]\,dt\right) \nonumber \\
&\quad - \left(b+c\right) k\left(\int_{0}^{\infty}\mathbb{E}_{\xi}^{0}\Big[\overline{f_{0}f_{10}}\left(\xi_{t}\right)\Big]\,dt +\int_{0}^{\infty}\mathbb{E}_{\xi}^{0}\Big[\overline{f_{1}f_{01}}\left(\xi_{t}\right)\Big]\,dt\right) .
\end{align}
\end{linenomath}
Thus, using \thm{dbTheorem}, we see that, when $b=-c$,
\begin{linenomath}
\begin{align}
\int_{0}^{\infty}\mathbb{E}_{\xi}^{0}\Big[\overline{f_{0}f_{11}}\left(\xi_{t}\right)\Big]\,dt +\int_{0}^{\infty}\mathbb{E}_{\xi}^{0}\Big[\overline{f_{1}f_{00}}\left(\xi_{t}\right)\Big]\,dt &= \frac{1}{2}\left[N\overline{f_{1}}\left(\xi\right)\overline{f_{0}}\left(\xi\right)\left(1+\frac{1}{k}\right) -2\overline{f_{10}}\left(\xi\right) -\overline{f_{1}f_{0}}\left(\xi\right)\right] ,
\end{align}
\end{linenomath}
and, when $b=c$,
\begin{linenomath}
\begin{align}
\int_{0}^{\infty}\mathbb{E}_{\xi}^{0}\Big[\overline{f_{0}f_{10}}\left(\xi_{t}\right)\Big]\,dt +\int_{0}^{\infty}\mathbb{E}_{\xi}^{0}\Big[\overline{f_{1}f_{01}}\left(\xi_{t}\right)\Big]\,dt &= \frac{1}{2}\left[N\overline{f_{1}}\left(\xi\right)\overline{f_{0}}\left(\xi\right)\left(1-\frac{1}{k}\right) + \overline{f_{1}f_{0}}\left(\xi\right)\right] ,
\end{align}
\end{linenomath}
from which we obtain \eq{sigmaTheoremEquation}.
\end{proof}

In other words, $\rho_{\xi ,\mathbf{A}}\left(w\right) > \rho_{\widehat{\xi},\mathbf{B}}\left(w\right)$ for all sufficiently small $w>0$ if and only if
\begin{linenomath}
\begin{align}
\sigma_{\xi}a + b &> c + \sigma_{\xi}d ,
\end{align}
\end{linenomath}
where $\sigma_{\xi}$ is the structure coefficient given by
\begin{linenomath}
\begin{align}
\sigma_{\xi} &= \frac{N\left(1+\frac{1}{k}\right)\overline{f_{1}}\cdot\overline{f_{0}} -2\overline{f_{10}}-\overline{f_{1}f_{0}}}{N\left(1-\frac{1}{k}\right)\overline{f_{1}}\cdot\overline{f_{0}}+\overline{f_{1}f_{0}}} .
\end{align}
\end{linenomath}
A simple calculation shows that
\begin{linenomath}
\begin{align}
\sigma_{\xi} &= \frac{\left(\frac{b}{c}\right)_{\xi}^{\ast}+1}{\left(\frac{b}{c}\right)_{\xi}^{\ast}-1} ,
\end{align}
\end{linenomath}
and, moreover, when the payoffs for the game are given by \eq{donationMatrix}, this result reduces to \eq{dbTheoremConclusion}.

\subsection{BD updating}

Under BD updating, a player is first chosen to reproduce with probability proportional to fitness (effective payoff). A neighbor of the reproducing player is then chosen uniformly-at-random for death, and the offspring of the reproducing player fills this vacancy. The rate at which the player at vertex $x$ is replaced by an $i$-player is then
\begin{linenomath}
\begin{align}
\pi_{i}^{w}\left(x,\xi\right) &= \frac{\displaystyle\sum_{y\in V\,:\,y\sim x}e_{i}^{w}\left(y,\xi\right)\mathds{1}_{\left\{\xi\left(y\right) =i\right\}}}{\displaystyle k\sum_{z\in V}\left[e_{1}^{w}\left(z,\xi\right)\xi\left(z\right) +e_{0}^{w}\left(z,\xi\right)\widehat{\xi}\left(z\right)\right]} .
\end{align}
\end{linenomath}
The neutral version of this process ($w=0$) is a perturbation of the voter model, and we can use techniques similar to those used for DB updating to establish our main results for BD updating.

\subsubsection{Critical benefit-to-cost ratios}

Again, we first need a technical lemma:

\begin{lemma}
For $i,j,l\in\left\{0,1\right\}$, $x\in V$, and $\xi$ a configuration of cooperators and defectors, let
\begin{linenomath}
\begin{align}
f_{ijl}\left(x,\xi\right) &= \frac{\#\left\{\left(y,z,v\right)\in V\times V\times V\ :\ x\sim y\sim z\sim v,\ \xi\left(y\right) =i,\ \xi\left(z\right) =j,\textrm{ and } \xi\left(v\right) =l\right\}}{k^{3}} ,
\end{align}
\end{linenomath}
and form the averages $\overline{f_{ijl}}\left(\xi\right)$ via Eq. (\ref{si:eq:averageOfFunction}). For any configuration, $\xi$, we have the following first-order expansion as $w\rightarrow 0^{+}$:
\begin{linenomath}
\begin{align}\label{eq:bdLemmaEquation}
\rho_{\xi ,\mathbf{C}}\left(w\right) = \rho_{\xi ,\mathbf{C}}\left(0\right) + w\Bigg( &ak\int_{0}^{\infty}\mathbb{E}_{\xi}^{0}\Big[\overline{f_{110}}\left(\xi_{t}\right)\Big]\,dt + bk\int_{0}^{\infty}\mathbb{E}_{\xi}^{0}\Big[\overline{f_{010}}\left(\xi_{t}\right)\Big]\,dt \nonumber \\
&-ck\int_{0}^{\infty}\mathbb{E}_{\xi}^{0}\Big[\overline{f_{101}}\left(\xi_{t}\right)\Big]\,dt -dk\int_{0}^{\infty}\mathbb{E}_{\xi}^{0}\Big[\overline{f_{100}}\left(\xi_{t}\right)\Big]\,dt\Bigg) + O\left(w^{2}\right) .
\end{align}
\end{linenomath}
\end{lemma}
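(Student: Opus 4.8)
The plan is to mirror the proof of \lem{firstLemma}, substituting the birth--death replacement rates for the death--birth ones. As the remark following \lem{firstLemma} observes, the first-order expansion \eq{firstOrderExpansion} from \citep{chen:AAP:2013} holds under mild assumptions on the dynamics, so it applies to BD updating as well, with $\overline{D}$ and $h_i(x,\xi)=\tfrac{d}{dw}\big|_{w=0}\pi_i^w(x,\xi)$ defined by \eq{DandH} but now using the BD rates $\pi_i^w$. Everything therefore reduces to computing $\overline{D}(\xi)$ for BD updating and recognizing it as the integrand of \eq{bdLemmaEquation}.

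First I would differentiate the BD rate $\pi_i^w(x,\xi)$ at $w=0$ using the quotient rule. The decisive difference from the DB case is that the denominator of $\pi_i^w$ is a \emph{global} sum over all vertices, not a sum over the neighbors of $x$; in particular $h_0\neq -h_1$, unlike in \lem{firstLemma}. Differentiating the effective payoffs $e_i^w(y,\xi)$ in the numerator produces a ``local'' term (a sum over $y\sim x$), while differentiating the common denominator produces a ``global'' term proportional to $f_i(x,\xi)$. I would substitute the resulting $h_1,h_0$ into $\overline{D}(\xi)=\tfrac1N\sum_x\big(\widehat{\xi}(x)h_1-\xi(x)h_0\big)$ and treat the two contributions separately.

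The key step is that the global contribution cancels. Its coefficient is proportional to $\sum_{x\in V}\big(\widehat{\xi}(x)f_1(x,\xi)-\xi(x)f_0(x,\xi)\big)$, which vanishes on any graph because the number of directed $CD$ edges equals the number of directed $DC$ edges; notably, this holds irrespective of the value of the denominator derivative, so the unwieldy global factor disappears entirely. What remains is purely local: expanding $f_1(y,\xi)$ and $f_0(y,\xi)$ inside the local term rewrites $\overline{D}$ as a combination of sums over two-step paths $x\sim y\sim z$ with prescribed strategies at the three vertices. Re-indexing these sums — using path reversal on the undirected graph and the fact that the initial vertex is unconstrained, which supplies the free first step in the definition of $f_{ijl}$ — identifies them with $\overline{f_{110}}$, $\overline{f_{010}}$, $\overline{f_{101}}$, and $\overline{f_{100}}$, carrying coefficients $a$, $b$, $-c$, and $-d$.

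The point demanding the most care is bookkeeping rather than any genuine obstacle. On a $k$-regular graph the neutral BD rates reduce to $\pi_i^0(x,\xi)=f_i(x,\xi)/N$, so each site copies a uniformly random neighbor at rate $1/N$; the chain is thus the voter model on a time scale slower by a factor $N$ than the rate-$N$ DB chain, and $\overline{D}$ inherits a compensating factor $1/N$. Rescaling this slow time to the unit-rate voter time used in \eq{firstOrderExpansion} cancels the $1/N$ against a factor $N$ and yields exactly the coefficients $ak$, $bk$, $-ck$, $-dk$ of \eq{bdLemmaEquation}. The only other subtlety is getting the orientation and the labeling of each pattern $(i,j,l)$ right when passing from the two-step sums to the $\overline{f_{ijl}}$, but each of the four terms matches after a single path reversal.
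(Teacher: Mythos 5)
Your proposal is correct and takes essentially the same route as the paper: apply the first-order expansion of Chen (2013) to the BD rates, differentiate via the quotient rule so that the global (denominator-derivative) contribution cancels, and identify the remaining local two-step sums with $\overline{f_{110}}$, $\overline{f_{010}}$, $\overline{f_{101}}$, $\overline{f_{100}}$ carrying coefficients $a$, $b$, $-c$, $-d$. Your directed-edge-counting justification for the cancellation and your explicit bookkeeping of the rate-$1/N$ neutral time scale are precisely the steps the paper performs implicitly (it rearranges the sums into two $\xi$-weighted terms whose global parts cancel identically, with the factor of $N$ absorbed silently), so the two proofs coincide in substance.
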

\begin{proof}
The first-order expansion of \eq{firstOrderExpansion} is valid under BD updating as well \citep[see][Theorem 3.8]{chen:AAP:2013}, except that the function $\overline{D}\left(\xi\right)$ of \eq{DandH} is defined in terms of the rates $\pi_{i}^{w}$ for BD updating rather than for DB updating. Writing $e^{w}\left(y,\xi\right) =e_{i}^{w}\left(y,\xi\right)$ whenever $\xi\left(y\right) =i$, we find that
\begin{linenomath}
\begin{align}\label{eq:bdfunctionD}
\overline{D}\left(\xi\right) &= \frac{d}{dw}\Big\vert_{w=0}\left(\frac{1}{N}\sum_{x\in V}\pi_{1}^{w}\left(x,\xi\right)\widehat{\xi}\left(x\right) - \frac{1}{N}\sum_{x\in V}\pi_{0}^{w}\left(x,\xi\right)\xi\left(x\right)\right) \nonumber \\
&= \sum_{x\in V}\xi\left(x\right)\left(\frac{1}{N}\frac{de^{w}\left(x,\xi\right)}{dw}\Big\vert_{w=0}-\frac{1}{N^{2}}\sum_{z\in V}\frac{de^{w}\left(z,\xi\right)}{dw}\Big\vert_{w=0}\right) \nonumber \\
&\quad -\sum_{x\in V}\xi\left(x\right)\left(\frac{1}{Nk}\sum_{y\in V\,:\,y\sim x}\frac{de^{w}\left(y,\xi\right)}{dw}\Big\vert_{w=0}-\frac{1}{N^{2}}\sum_{z\in V}\frac{de^{w}\left(z,\xi\right)}{dw}\Big\vert_{w=0}\right) \nonumber \\
&= k\left(a\overline{f_{11}}\left(\xi\right) +b\overline{f_{10}}\left(\xi\right)\right) - k\left(a\overline{f_{111}}\left(\xi\right) +b\overline{f_{110}}\left(\xi\right) +c\overline{f_{101}}\left(\xi\right) +d\overline{f_{100}}\left(\xi\right)\right) \nonumber \\
&= k\left(a\overline{f_{110}}\left(\xi\right) +b\overline{f_{010}}\left(\xi\right) -c\overline{f_{101}}\left(\xi\right) -d\overline{f_{100}}\left(\xi\right)\right) .
\end{align}
\end{linenomath}
We then obtain \eq{bdLemmaEquation} by applying this calculation to the first-order expansion of \eq{firstOrderExpansion}.
\end{proof}

Our main result for BD updating is the following:
\begin{theorem}
In the donation game, for any configuration, $\xi$, we have the following expansion as $w\rightarrow 0^{+}$:
\begin{linenomath}
\begin{align}\label{eq:bdTheoremEquation}
\rho_{\xi ,\mathbf{C}}\left(w\right) = \rho_{\xi ,\mathbf{C}}\left(0\right) - \frac{wk}{2}\Big\{ b\overline{f_{10}}\left(\xi\right) + cN\overline{f_{1}}\left(\xi\right)\overline{f_{0}}\left(\xi\right) \Big\} + O\left(w^{2}\right) .
\end{align}
\end{linenomath}
\end{theorem}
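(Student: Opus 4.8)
The plan is to start from the BD first-order expansion \eq{bdLemmaEquation}, specialize the payoffs to the donation game \eq{donationMatrix}, and then reduce the three resulting three-point path integrals to the two-point integrals already evaluated in \lem{secondLemma}. Writing $I_{ijl}:=\int_0^\infty\mathbb{E}_\xi^0\bigl[\overline{f_{ijl}}(\xi_t)\bigr]\,dt$ and inserting the donation-game entries of \eq{donationMatrix} into the generic coefficients $a,b,c,d$ of \eq{bdLemmaEquation}, so that these become $b-c$, $-c$, $b$, and $0$ respectively, the first-order coefficient of $w$ collapses, after collecting the benefit and the cost, to $bk\bigl(I_{110}-I_{101}\bigr)-ck\bigl(I_{110}+I_{010}\bigr)$. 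Hence it suffices to evaluate only the two combinations $\overline{f_{110}}+\overline{f_{010}}$ and $\overline{f_{110}}-\overline{f_{101}}$.

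The crux is that each of these combinations reduces, as a pointwise identity valid at every configuration $\eta$, to the two-point frequencies whose voter integrals are known. Writing the averages through the stationary walk $(X_n)_{n\geqslant 0}$ as in \lem{secondLemma}, so that $\overline{f_{110}}(\eta)=\mathbb{E}_\pi\bigl[\eta(X_1)\eta(X_2)\widehat{\eta}(X_3)\bigr]$ and similarly for the others, the complementarity $\eta(X_1)+\widehat{\eta}(X_1)=1$ immediately gives $\overline{f_{110}}(\eta)+\overline{f_{010}}(\eta)=\mathbb{E}_\pi\bigl[\eta(X_2)\widehat{\eta}(X_3)\bigr]$, which by stationarity of the walk (so that $(X_2,X_3)$ and $(X_0,X_1)$ have the same law) equals $\overline{f_{10}}(\eta)$. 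For the difference, factoring out $\eta(X_1)$ and using $\eta(X_2)\widehat{\eta}(X_3)-\widehat{\eta}(X_2)\eta(X_3)=\eta(X_2)-\eta(X_3)$ yields $\overline{f_{110}}(\eta)-\overline{f_{101}}(\eta)=\mathbb{E}_\pi\bigl[\eta(X_0)\eta(X_1)\bigr]-\mathbb{E}_\pi\bigl[\eta(X_0)\eta(X_2)\bigr]$; rewriting each product via $\eta\eta=\eta-\eta\widehat{\eta}$ and recognizing $\mathbb{E}_\pi[\eta(X_0)\widehat{\eta}(X_1)]=\overline{f_{10}}(\eta)$ and $\mathbb{E}_\pi[\eta(X_0)\widehat{\eta}(X_2)]=\overline{f_1 f_0}(\eta)$ collapses this to $\overline{f_1 f_0}(\eta)-\overline{f_{10}}(\eta)$.

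With these two identities, the required integrals are exactly those of \lem{secondLemma}. Integrating the first gives $I_{110}+I_{010}=\int_0^\infty\mathbb{E}_\xi^0\bigl[\overline{f_{10}}(\xi_t)\bigr]\,dt=\tfrac12 N\overline{f_1}(\xi)\overline{f_0}(\xi)$ by \eq{integralOne}; integrating the second and using \eq{integralOne} together with \eq{integralTwo} gives $I_{110}-I_{101}=\bigl(\tfrac12 N\overline{f_1}(\xi)\overline{f_0}(\xi)-\tfrac12\overline{f_{10}}(\xi)\bigr)-\tfrac12 N\overline{f_1}(\xi)\overline{f_0}(\xi)=-\tfrac12\overline{f_{10}}(\xi)$. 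Substituting both values into $bk(I_{110}-I_{101})-ck(I_{110}+I_{010})$ produces the first-order coefficient $-\tfrac{k}{2}\bigl(b\,\overline{f_{10}}(\xi)+cN\overline{f_1}(\xi)\overline{f_0}(\xi)\bigr)$, which is precisely \eq{bdTheoremEquation}.

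I expect the main obstacle to be the reduction in the second paragraph: noticing that the genuinely three-point path statistics generated by BD updating can be rewritten, using only $\widehat{\eta}=1-\eta$ and the reversibility of the simple random walk, as linear combinations of the two-point statistics $\overline{f_{10}}$ and $\overline{f_1 f_0}$. This is what lets us reuse \lem{secondLemma} intact and avoid setting up and solving a new three-walk coalescing system; the remaining algebra is then routine.
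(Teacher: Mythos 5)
Your proposal is correct and takes essentially the same route as the paper: specialize the BD expansion \eq{bdLemmaEquation} to the donation game, collapse the three-point statistics via the pointwise identities $\overline{f_{110}}+\overline{f_{010}}=\overline{f_{10}}$ and $\overline{f_{110}}-\overline{f_{101}}=\overline{f_{1}f_{0}}-\overline{f_{10}}$, and then invoke \lem{secondLemma} for the two remaining voter-model integrals. The only differences are cosmetic: the paper simplifies $\overline{D}$ pointwise before integrating (and leaves these identities implicit), whereas you integrate first and spell out the random-walk justification explicitly.
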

\begin{proof}
For the donation game, the function $\overline{D}$ of \eq{bdfunctionD} simplifies to
\begin{linenomath}
\begin{align}
\overline{D}\left(\xi\right) &= -kb\overline{f_{10}}\left(\xi\right) + kb\overline{f_{1}f_{0}}\left(\xi\right) - kc\overline{f_{10}}\left(\xi\right) .
\end{align}
\end{linenomath}
Therefore, by the calculations of \lem{secondLemma}, we see that
\begin{linenomath}
\begin{align}
\int_{0}^{\infty}\mathbb{E}_{\xi}^{0}\Big[\overline{D}\left(\xi_{t}\right)\Big]\,dt &= -\frac{k}{2}\left[ b\overline{f_{10}}\left(\xi\right) + cN\overline{f_{1}}\left(\xi\right)\overline{f_{0}}\left(\xi\right) \right] ,
\end{align}
\end{linenomath}
which gives \eq{bdTheoremEquation} and completes the proof.
\end{proof}

Since $b\overline{f_{10}}\left(\xi\right) + cN\overline{f_{1}}\left(\xi\right)\overline{f_{0}}\left(\xi\right) >0$ for each mixed state, $\xi$, it follows that $\rho_{\xi,\mathbf{C}}\left(w\right) <\rho_{\xi ,\mathbf{C}}\left(0\right)$ for all sufficiently small $w>0$ whenever $\xi$ is not an absorbing state, so cooperation is always suppressed by weak selection under BD updating.

\subsubsection{Structure coefficients}

Although cooperation is never favored by weak selection under BD updating, we can still write down a condition for selection to favor strategy $A$ in an arbitrary $2\times 2$ game whose payoff matrix is given by \eq{generic2by2}:
\begin{theorem}
$\rho_{\xi ,\mathbf{A}}\left(w\right) > \rho_{\widehat{\xi},\mathbf{B}}\left(w\right)$ for all sufficiently small $w>0$ if and only if
\begin{linenomath}
\begin{align}\label{eq:bdSigmaEquation}
\left(a-d\right)\left[ N\overline{f_{1}}\left(\xi\right)\overline{f_{0}}\left(\xi\right) - \overline{f_{10}}\left(\xi\right)\right] + \left(b-c\right)\left[N\overline{f_{1}}\left(\xi\right)\overline{f_{0}}\left(\xi\right) + \overline{f_{10}}\left(\xi\right)\right] > 0 .
\end{align}
\end{linenomath}
\end{theorem}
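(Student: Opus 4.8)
The plan is to run the same argument that proves \thm{dbSigmaTheorem}, substituting the BD expression for $\overline{D}$ from \eq{bdfunctionD} in place of the DB one. First I would reduce the comparison to a first-order sign condition. Since the neutral ($w=0$) BD dynamics on a $k$-regular graph keeps the number of $A$-players a martingale and fixates at a monomorphic state, we have $\rho_{\xi,\mathbf{A}}(0)+\rho_{\widehat{\xi},\mathbf{A}}(0)=1$ and $\rho_{\widehat{\xi},\mathbf{A}}(w)+\rho_{\widehat{\xi},\mathbf{B}}(w)=1$. Substituting these and applying the first-order expansion \eq{firstOrderExpansion} (which the excerpt notes is valid for BD) gives, for small $w>0$,
\[
\rho_{\xi,\mathbf{A}}(w) > \rho_{\widehat{\xi},\mathbf{B}}(w) \iff \int_{0}^{\infty}\mathbb{E}_{\xi}^{0}\Big[\overline{D}(\xi_{t})\Big]\,dt + \int_{0}^{\infty}\mathbb{E}_{\widehat{\xi}}^{0}\Big[\overline{D}(\xi_{t})\Big]\,dt > 0 ,
\]
exactly as in the DB case but now with $\overline{D}$ given by \eq{bdfunctionD}.

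Next I would compute the combined integral and regroup it by the combinations $a-d$ and $b-c$, using \eq{bdLemmaEquation}. Because starting the neutral dynamics at $\widehat{\xi}$ is the same as conjugating the configuration at each time, we have $\mathbb{E}_{\widehat{\xi}}^{0}\big[\overline{f_{ijl}}(\xi_{t})\big]=\mathbb{E}_{\xi}^{0}\big[\overline{f_{(1-i)(1-j)(1-l)}}(\xi_{t})\big]$, which makes the coefficients of $b$ and of $-c$ coincide automatically, just as in the DB argument. The coefficients of $a$ and of $-d$, however, do \emph{not} visibly match: they come out as $k\int\mathbb{E}_{\xi}^{0}[\overline{f_{110}}+\overline{f_{001}}]\,dt$ and $k\int\mathbb{E}_{\xi}^{0}[\overline{f_{100}}+\overline{f_{011}}]\,dt$. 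Here I would invoke the path-reversal identity $\overline{f_{ijl}}(\eta)=\overline{f_{lji}}(\eta)$, valid for every configuration because reversing a two-step walk is a bijection on such walks on a regular graph; this gives $\overline{f_{110}}=\overline{f_{011}}$ and $\overline{f_{001}}=\overline{f_{100}}$, so the two coefficients agree. The result is
\[
\int_{0}^{\infty}\mathbb{E}_{\xi}^{0}\Big[\overline{D}(\xi_{t})\Big]\,dt + \int_{0}^{\infty}\mathbb{E}_{\widehat{\xi}}^{0}\Big[\overline{D}(\xi_{t})\Big]\,dt = (a-d)\,k I_{1} + (b-c)\,k I_{2} ,
\]
with $I_{1}=\int_{0}^{\infty}\mathbb{E}_{\xi}^{0}\big[\overline{f_{110}}(\xi_{t})+\overline{f_{001}}(\xi_{t})\big]\,dt$ and $I_{2}=\int_{0}^{\infty}\mathbb{E}_{\xi}^{0}\big[\overline{f_{010}}(\xi_{t})+\overline{f_{101}}(\xi_{t})\big]\,dt$.

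Rather than evaluate the three-index integrals $I_{1},I_{2}$ from scratch through coalescing random walks, I would pin them down by specializing to the donation game, exactly as the DB proof does. The integrals $I_{1},I_{2}$ are pure voter-model functionals of $\xi$ and do not depend on the game, so changing the payoffs only alters the prefactors. For the donation game \eq{donationMatrix} the generic entries give $a-d=b-c$ and $(b-c)_{\mathrm{gen}}=-(b+c)$ in benefit/cost variables, so the combined integral equals $(b-c)kI_{1}-(b+c)kI_{2}$; on the other hand the BD donation-game theorem \eq{bdTheoremEquation}, together with $\overline{f_{10}}(\widehat{\xi})=\overline{f_{10}}(\xi)$ and $\overline{f_{1}}(\widehat{\xi})\overline{f_{0}}(\widehat{\xi})=\overline{f_{1}}(\xi)\overline{f_{0}}(\xi)$, shows the $\widehat{\xi}$ integral equals the $\xi$ integral, so the combined value is $-k\big[b\overline{f_{10}}(\xi)+cN\overline{f_{1}}(\xi)\overline{f_{0}}(\xi)\big]$. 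Matching the coefficients of $b$ and of $c$ yields the linear system $I_{1}-I_{2}=-\overline{f_{10}}(\xi)$ and $I_{1}+I_{2}=N\overline{f_{1}}(\xi)\overline{f_{0}}(\xi)$, hence $kI_{1}=\tfrac{k}{2}\big(N\overline{f_{1}}(\xi)\overline{f_{0}}(\xi)-\overline{f_{10}}(\xi)\big)$ and $kI_{2}=\tfrac{k}{2}\big(N\overline{f_{1}}(\xi)\overline{f_{0}}(\xi)+\overline{f_{10}}(\xi)\big)$. Substituting back and discarding the positive factor $k/2$ produces exactly \eq{bdSigmaEquation}.

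The only genuinely new ingredient beyond transcribing the DB argument is the bookkeeping in the second step, and this is where I expect the care to be needed: under conjugation the $b$- and $c$-coefficients merge on their own, but the $a$- and $d$-coefficients merge only after invoking the path-reversal symmetry $\overline{f_{ijl}}=\overline{f_{lji}}$, an identity that is automatic for the length-two functionals appearing in the DB case but must be stated explicitly for the length-three functionals here. Everything else is the reduction via neutrality, reuse of \eq{bdTheoremEquation}, and the elementary linear algebra that determines $I_{1}$ and $I_{2}$.
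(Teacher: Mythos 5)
Your proposal is correct and follows essentially the same route as the paper's proof: reduce via neutrality to the sign of $\int_{0}^{\infty}\mathbb{E}_{\xi}^{0}\big[\overline{D}\left(\xi_{t}\right)\big]\,dt+\int_{0}^{\infty}\mathbb{E}_{\widehat{\xi}}^{0}\big[\overline{D}\left(\xi_{t}\right)\big]\,dt$, group the terms into coefficients of $a-d$ and $b-c$, and then solve for those two game-independent integrals by specializing to the donation game and invoking \eq{bdTheoremEquation}. The only difference is explicitness: the paper simply asserts the grouped form ``by the same argument'' as \thm{dbSigmaTheorem}, whereas you correctly supply the reversibility identity $\overline{f_{ijl}}=\overline{f_{lji}}$ needed to merge the $a$- and $d$-coefficients for the length-three functionals.
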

\begin{proof}
The same argument given in the proof of \thm{dbSigmaTheorem} shows that \eq{bdSigmaEquation} is equivalent to
\begin{linenomath}
\begin{align}
\int_{0}^{\infty} &\mathbb{E}_{\xi}^{0}\Big[\overline{D}\left(\xi_{t}\right)\Big]\,dt + \int_{0}^{\infty}\mathbb{E}_{\widehat{\xi}}^{0}\Big[\overline{D}\left(\xi_{t}\right)\Big]\,dt \nonumber \\
&= \left(a-d\right) k\left(\int_{0}^{\infty}\mathbb{E}_{\xi}^{0}\Big[\overline{f_{110}}\left(\xi_{t}\right)\Big]\,dt +\int_{0}^{\infty}\mathbb{E}_{\xi}^{0}\Big[\overline{f_{100}}\left(\xi_{t}\right)\Big]\,dt\right) \nonumber \\
&\quad + \left(b-c\right) k\left(\int_{0}^{\infty}\mathbb{E}_{\xi}^{0}\Big[\overline{f_{010}}\left(\xi_{t}\right)\Big]\,dt +\int_{0}^{\infty}\mathbb{E}_{\xi}^{0}\Big[\overline{f_{101}}\left(\xi_{t}\right)\Big]\,dt\right) > 0 .
\end{align}
\end{linenomath}
Solving for the coefficients of $a-d$ and $b-c$ as in the proof of \thm{dbSigmaTheorem} gives \eq{bdSigmaEquation}.
\end{proof}

Written differently, $\rho_{\xi ,\mathbf{A}}\left(w\right) > \rho_{\widehat{\xi},\mathbf{B}}\left(w\right)$ for all sufficiently small $w>0$ if and only if
\begin{linenomath}
\begin{align}
\sigma_{\xi}a + b &> c + \sigma_{\xi}d ,
\end{align}
\end{linenomath}
where $\sigma_{\xi}$ is the structure coefficient given by
\begin{linenomath}
\begin{align}
\sigma_{\xi} &= \frac{N\overline{f_{1}}\left(\xi\right)\overline{f_{0}}\left(\xi\right) -\overline{f_{10}}\left(\xi\right)}{N\overline{f_{1}}\left(\xi\right)\overline{f_{0}}\left(\xi\right) +\overline{f_{10}}\left(\xi\right)} .
\end{align}
\end{linenomath}

\subsection{Strategic placement of cooperators for DB updating}

We turn now to the consequences of \thm{dbTheorem} for DB updating.

\begin{proposition}\label{prop:uniformConvergence}
Let $k\geqslant 2$ be fixed. In the limit of large population size, $N\rightarrow\infty$, the critical benefit-to-cost ratio converges uniformly to $k$ over all $k$-regular graphs, $G$, of size $N$ and all configurations, $\xi$, on $G$.
\end{proposition}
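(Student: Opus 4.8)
The plan is to prove the stronger, fully explicit statement that $0 \le (b/c)^\ast_\xi - k \le 2k(2k-1)/(N-4k)$ for every $k$-regular graph $G$ on $N > 4k$ vertices and every \emph{mixed} configuration $\xi$ (the monomorphic states are degenerate, with denominator zero in \eq{ratioGeneral} and hence critical ratio $\infty$, and are excluded). Since this bound depends only on $N$ and $k$ and tends to $0$ as $N\to\infty$, uniform convergence follows at once. First I would record the exact difference. Abbreviating $P := N\overline{f_{1}}\cdot\overline{f_{0}}$, $Q := \overline{f_{10}}$, and $R := \overline{f_{1}f_{0}}$, the formula \eq{ratioGeneral} reads $(b/c)^\ast_\xi = k(P-Q)/(P-kQ-kR)$, and subtracting $k$ yields
\begin{align}
\left(\frac{b}{c}\right)_{\xi}^{\ast} - k = \frac{k\bigl[(k-1)Q + kR\bigr]}{P - kQ - kR}.
\end{align}
As $Q, R \ge 0$, the numerator is nonnegative, so it suffices to bound this quotient from above.

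Next I would establish three a priori estimates, uniform over $G$ and $\xi$, in terms of $m := \min\{n, N-n\}$, where $n$ is the number of cooperators. Double-counting gives $\overline{f_{1}} = n/N$ and $\overline{f_{0}} = (N-n)/N$, so $P = m(N-m)/N \ge m/2$ because $N - m \ge N/2$. For the correction terms I would invoke the probabilistic interpretations of the local frequencies recorded after \eq{localFrequencyDefinitions}: since the uniform distribution is stationary for the simple random walk on a regular graph, $\overline{f_{10}}$ and $\overline{f_{1}f_{0}}$ are each dominated by both of their one-step marginals, namely $\overline{f_{1}} = n/N$ and $\overline{f_{0}} = (N-n)/N$; equivalently, this follows combinatorially from $\overline{f_{10}} = \tfrac1N\sum_x \xi(x) f_{0}(x,\xi)$ together with $f_{1}, f_{0} \le 1$. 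Either way, $Q \le m/N$ and $R \le m/N$.

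I would then combine these. The numerator is at most $k\bigl[(k-1)+k\bigr](m/N) = k(2k-1)m/N$, while the denominator is at least $m/2 - 2km/N = m(\tfrac12 - 2k/N)$, which is strictly positive once $N > 4k$. The common factor $m$ cancels, giving
\begin{align}
0 \le \left(\frac{b}{c}\right)_{\xi}^{\ast} - k \le \frac{k(2k-1)/N}{\tfrac12 - 2k/N} = \frac{2k(2k-1)}{N - 4k},
\end{align}
a bound free of $G$ and $\xi$ that vanishes as $N\to\infty$. This simultaneously shows that the denominator of \eq{ratioGeneral} is positive for every mixed configuration once $N > 4k$ (so the critical ratio is finite there) and that $(b/c)^\ast_\xi \to k$ uniformly.

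The one genuine obstacle is the uniform positive lower bound on the denominator $P - kQ - kR$. A crude argument that $P$ dominates the bounded quantities $Q, R$ fails, since $P$ itself can be of order $1$ --- for instance, when $\xi$ carries a single cooperator, $P = (N-1)/N$. The resolution, which is the crux of the proof, is that the single scale $m$ both lower-bounds $P$ (by $m/2$) and upper-bounds $Q$ and $R$ (by $m/N$); this shared scale cancels in the quotient and makes the estimate genuinely uniform across all abundances and placements of mutants, not merely for configurations carrying a macroscopic number of each strategy.
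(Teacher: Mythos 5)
Your proof is correct, and it follows the same overall strategy as the paper---reduce uniform convergence to an explicit bound on $\left(b/c\right)_{\xi}^{\ast}-k$ depending only on $N$ and $k$---but with a genuinely different key estimate. The paper bounds $\overline{f_{10}}$ and $\overline{f_{1}f_{0}}$ by $\left(\overline{f_{1}}\,\overline{f_{0}}\right)^{1/2}$ via Cauchy--Schwarz and reversibility of the random walk, then takes the worst case over $n$, which occurs at $n=1$ and yields $\left|\left(b/c\right)_{\xi}^{\ast}-k\right|\leqslant k\left(2k+1\right)/\left(\left(N-1\right)^{1/2}-2k\right)$ for $N>4k^{2}+1$. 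You instead use the sharper (and equally elementary) bound by the smaller one-step marginal, $\overline{f_{10}},\,\overline{f_{1}f_{0}}\leqslant m/N$ with $m=\min\left\{n,N-n\right\}$, paired with the lower bound $N\overline{f_{1}}\cdot\overline{f_{0}}\geqslant m/2$; since $\min\left\{\overline{f_{1}},\overline{f_{0}}\right\}\leqslant\left(\overline{f_{1}}\,\overline{f_{0}}\right)^{1/2}$, your control of the correction terms is pointwise stronger, and the shared scale $m$ cancels exactly in the quotient. The payoff is a strictly better result: your bound $2k\left(2k-1\right)/\left(N-4k\right)$ decays like $1/N$ rather than $1/\sqrt{N}$, holds under the weaker hypothesis $N>4k$ rather than $N>4k^{2}+1$, and as a by-product shows that every mixed configuration has a finite critical ratio once $N>4k$ (consistent with the paper's example in Fig.~2(b), where $N=9<4k=16$ and an infinite ratio occurs). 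Your identification of the denominator's positivity as the crux, and its resolution through the single scale $m$, is exactly the right diagnosis; the paper resolves the same issue by letting the factor $\left(\overline{f_{1}}\,\overline{f_{0}}\right)^{1/2}$ cancel instead, which is why its threshold on $N$ is quadratic in $k$ rather than linear.
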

Proposition \ref{prop:uniformConvergence} follows immediately from the following technical result:
\begin{lemma}
For fixed $k\geqslant 2$ and for $N>4k^{2}+1$ such that there exists a $k$-regular graph with $N$ vertices,
\begin{linenomath}
\begin{align}
\max_{G}\max_{\xi}\left| \left(\frac{b}{c}\right)_{\xi}^{\ast} - k \right| \leqslant \frac{k\left(2k+1\right)}{\left(N-1\right)^{1/2}-2k} ,
\end{align}
\end{linenomath}
where $G$ ranges over all $k$-regular graphs on $N$ vertices, and, for each $G$, $\xi$ ranges over all mixed configurations.
\end{lemma}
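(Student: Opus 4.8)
The plan is to start from the closed form for the critical ratio recorded in \eq{dbTheoremConclusion} and reduce the claim to two one-sided estimates on the averaged local frequencies. Writing $P := N\overline{f_1}(\xi)\overline{f_0}(\xi)$, $Q := \overline{f_{10}}(\xi)$, and $R := \overline{f_1 f_0}(\xi)$, a direct subtraction gives the exact identity
\begin{linenomath}
\begin{align}
\left(\frac{b}{c}\right)_\xi^\ast - k &= \frac{k\left[(k-1)Q + kR\right]}{P - kQ - kR} .
\end{align}
\end{linenomath}
Since $Q,R\geqslant 0$ and $k\geqslant 2$, the numerator is nonnegative, so once I show the denominator is positive the difference is itself nonnegative and the absolute value drops out; the task then reduces to bounding the numerator from above and the denominator from below, uniformly over $G$ and $\xi$.

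Next I would record the elementary local-frequency estimates that do the work. Double counting on a $k$-regular graph gives $\overline{f_1}(\xi) = n/N$ and $\overline{f_0}(\xi) = (N-n)/N$, where $n$ is the number of cooperators, so $P = n(N-n)/N$. Using the probabilistic interpretation of the averaged local frequencies of \eq{localFrequencyDefinitions}---namely that the uniform distribution is stationary for the simple random walk on a regular graph, and that the two walks defining $\overline{f_1 f_0}$ are independent---I obtain $Q \leqslant \min(\overline{f_1},\overline{f_0})$ and $R \leqslant \min(\overline{f_1},\overline{f_0})$, i.e. $Q,R \leqslant \min(n,N-n)/N$. The crucial step is to pass from the minimum to a geometric mean: $\min(n,N-n) \leqslant \sqrt{n(N-n)} = \sqrt{NP}$, whence $Q,R \leqslant \sqrt{P/N}$. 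This is precisely where the square root---and hence the $\sqrt{N-1}$ in the statement---enters.

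Assembling these, the numerator is at most $k(2k-1)\sqrt{P/N}$ (which I will relax to $k(2k+1)\sqrt{P/N}$ to match the stated constant), while the denominator satisfies $P - kQ - kR \geqslant P - 2k\sqrt{P/N} = \sqrt{P/N}\bigl(\sqrt{NP} - 2k\bigr)$. Cancelling the common factor $\sqrt{P/N}$ leaves $k(2k+1)/(\sqrt{NP} - 2k)$, and since $\sqrt{NP} = \sqrt{n(N-n)} \geqslant \sqrt{N-1}$ over all mixed states, monotonicity of $x \mapsto (x-2k)^{-1}$ yields the claimed bound. Positivity of the denominator, needed throughout, reduces to $\sqrt{N-1} > 2k$, i.e. exactly the hypothesis $N > 4k^2 + 1$; this same condition guarantees that every mixed configuration has a finite critical ratio.

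I expect the main obstacle to be the uniform lower bound on the denominator rather than any single estimate: because $Q$ and $R$ really can be as large as $\min(n,N-n)/N$ (for instance when the cooperators form an independent set), one cannot simply discard them, and a naive bound leaves the denominator potentially small for intermediate $n$. The geometric-mean inequality $\min(n,N-n) \leqslant \sqrt{n(N-n)}$ is what allows the factor $\sqrt{P/N}$ to cancel between numerator and denominator, turning a potential loss of control into the clean $\sqrt{N-1}$ rate and tying the positivity threshold to $N > 4k^2+1$. The remaining care is in verifying the two probabilistic upper bounds---in particular $Q = \overline{f_{10}} \leqslant \overline{f_0}$, which uses stationarity of the uniform measure to identify the law of the second step of the walk---and in checking that the worst case over $n$ is attained at the extreme $n=1$ (or $n=N-1$), where $\sqrt{n(N-n)}$ is smallest.
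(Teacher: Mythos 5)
Your proposal is correct and follows essentially the same route as the paper: an exact identity for $\left(\frac{b}{c}\right)_{\xi}^{\ast}-k$, the key estimate $\overline{f_{10}},\,\overline{f_{1}f_{0}}\leqslant\left(\overline{f_{1}}\cdot\overline{f_{0}}\right)^{1/2}=\sqrt{n\left(N-n\right)}/N$, cancellation of the common square-root factor, and minimization of $\sqrt{n\left(N-n\right)}$ at $n=1$ or $n=N-1$ under the positivity condition $\sqrt{N-1}>2k$. The only cosmetic difference is that you derive the key estimate from stationarity of the marginals plus $\min\left(a,b\right)\leqslant\sqrt{ab}$, whereas the paper invokes Cauchy--Schwarz and reversibility; these yield the identical intermediate bound.
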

\begin{proof}
By the Cauchy-Schwarz inequality and the reversibility of the random walk, both $\overline{f_{10}}\left(\xi\right)$ and $\overline{f_{1}f_{0}}\left(\xi\right)$ are bounded by $\left(\overline{f_{1}}\left(\xi\right)\overline{f_{0}}\left(\xi\right)\right)^{1/2}$. Therefore, for any such $G$ and any mixed $\xi$, it follows from \eq{ratioGeneral} that
\begin{linenomath}
\begin{align}
\max_{G}\max_{\xi}\left| \left(\frac{b}{c}\right)_{\xi}^{\ast} - k \right| \leqslant \max_{0<n<N} \frac{k\left(2k+1\right)\left(\frac{n\left(N-n\right)}{N^{2}}\right)^{1/2}}{N\left(\frac{n\left(N-n\right)}{N^{2}}\right) -2k\left(\frac{n\left(N-n\right)}{N^{2}}\right)^{1/2}} = \frac{k\left(2k+1\right)}{\left(N-1\right)^{1/2}-2k} ,
\end{align}
\end{linenomath}
which completes the proof.
\end{proof}

\begin{proposition}
For all configurations obtained by placing an arbitrary (but fixed) number of cooperators uniformly at random, the critical benefit-to cost ratio is given by \eq{ratioRandom} in the main text.
\end{proposition}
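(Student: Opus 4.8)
The plan is to read the ``critical benefit-to-cost ratio for a random configuration'' as the threshold at which the \emph{average} fixation probability, taken over all uniformly random placements of the $n$ cooperators, crosses $n/N$. Writing $\overline{\rho}_n(w)$ for the mean of $\rho_{\xi,\mathbf{C}}(w)$ over the $\binom{N}{n}$ equally likely configurations $\xi$ with exactly $n$ cooperators, and using $\rho_{\xi,\mathbf{C}}(0)=n/N$ for every such $\xi$, I would average the first-order expansion of \thm{dbTheorem} term by term (legitimate because the ensemble is finite and the coefficient of $w$ in \eq{dbTheoremEquation} is linear in the three local-frequency averages). This gives
\begin{align*}
\overline{\rho}_n(w) = \frac{n}{N} + \frac{w}{2}\Big\{ b\big[N\overline{f_1}\,\overline{f_0} - k\,\mathbb{E}[\overline{f_{10}}] - k\,\mathbb{E}[\overline{f_1 f_0}]\big] - c\big[kN\overline{f_1}\,\overline{f_0} - k\,\mathbb{E}[\overline{f_{10}}]\big] \Big\} + O(w^2),
\end{align*}
where $\mathbb{E}$ denotes expectation over the placement. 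The condition $\overline{\rho}_n(w)>n/N$ for small $w>0$ is then equivalent to $b/c$ exceeding the ratio of the two averaged bracketed quantities, so the proof reduces to evaluating three expectations.

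First I would note that $\overline{f_1}(\xi)$ and $\overline{f_0}(\xi)$ are in fact \emph{deterministic} across all configurations with $n$ cooperators: by $k$-regularity, $\sum_x\sum_{y\sim x}\xi(y)=k\sum_y\xi(y)=kn$, so $\overline{f_1}\equiv n/N$ and $\overline{f_0}\equiv(N-n)/N$, whence $N\overline{f_1}\,\overline{f_0}=n(N-n)/N=:P$ requires no averaging. For the remaining two terms I would use the adjacency representations $\overline{f_{10}}(\xi)=\tfrac{1}{kN}\xi^{\T}\Gamma\widehat{\xi}$ and $\overline{f_1 f_0}(\xi)=\tfrac{1}{k^2N}\xi^{\T}\Gamma^2\widehat{\xi}$ recorded in the setup, together with the without-replacement two-point law $\mathbb{P}(\xi(y)=1,\xi(z)=0)=n(N-n)/(N(N-1))$ valid for any distinct $y,z$. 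Summing $\overline{f_{10}}$ over the $Nk$ directed edges gives
\begin{align*}
\mathbb{E}[\overline{f_{10}}] = \frac{1}{Nk}\sum_{y\sim z}\mathbb{P}\big(\xi(y)=1,\xi(z)=0\big) = \frac{n(N-n)}{N(N-1)} = \frac{P}{N-1}.
\end{align*}

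The one step needing care is $\mathbb{E}[\overline{f_1 f_0}]$, because $\xi^{\T}\Gamma^2\widehat{\xi}=\sum_{y,z}(\Gamma^2)_{yz}\,\xi(y)\widehat{\xi}(z)$ runs over \emph{ordered pairs} of common neighbors and the diagonal must be separated from the off-diagonal. The diagonal terms vanish identically since $\xi(y)\widehat{\xi}(y)=0$, so only $y\neq z$ contributes, with $\sum_{y\neq z}(\Gamma^2)_{yz}=\mathbf 1^{\T}\Gamma^2\mathbf 1-\mathrm{tr}\,\Gamma^2=k^2N-kN=kN(k-1)$ for any $k$-regular graph. Hence
\begin{align*}
\mathbb{E}[\overline{f_1 f_0}] = \frac{1}{k^2N}\cdot kN(k-1)\cdot\frac{n(N-n)}{N(N-1)} = \frac{k-1}{k}\cdot\frac{P}{N-1}.
\end{align*}
Substituting $N\overline{f_1}\,\overline{f_0}=P$, $\mathbb{E}[\overline{f_{10}}]=P/(N-1)$, and $\mathbb{E}[\overline{f_1 f_0}]=(k-1)P/\big(k(N-1)\big)$ into the averaged threshold, the common factor $P$ cancels, the numerator becomes $kP\,\tfrac{N-2}{N-1}$, and the denominator collapses via $k+(k-1)=2k-1$ to $P\,\tfrac{N-2k}{N-1}$, yielding exactly $k(N-2)/(N-2k)$, i.e. \eq{ratioRandom}; that both expectations depend only on $N$ and $k$ also confirms the ratio is graph-independent.

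The main obstacle is precisely the bookkeeping in $\mathbb{E}[\overline{f_1 f_0}]$: recognizing that the diagonal contribution is zero and that the off-diagonal count is $kN(k-1)$ rather than $k^2N$ is what produces the factor $(k-1)/k$ responsible for the $-2k$ (rather than $-k$) in the denominator of \eq{ratioRandom}. A secondary but important point to state explicitly is that averaging the numerator and denominator of the critical ratio separately—rather than averaging $(b/c)^{\ast}_{\xi}$ over configurations—is the correct operation here, since the object of interest is the threshold at which the \emph{mean} fixation probability crosses $n/N$; this is justified by the linearity in \thm{dbTheorem} and the finiteness of the ensemble, which together permit the term-by-term averaging used in the first step.
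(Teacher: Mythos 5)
Your proposal is correct and follows essentially the same route as the paper: both use the two-point law $\mathbb{P}\left(\xi\left(y\right)=1,\xi\left(z\right)=0\right)=n\left(N-n\right)/\left(N\left(N-1\right)\right)$ for distinct vertices to compute $\mathbb{E}\left[\overline{f_{10}}\right]=\frac{n\left(N-n\right)}{N\left(N-1\right)}$ and $\mathbb{E}\left[\overline{f_{1}f_{0}}\right]=\frac{\left(k-1\right)n\left(N-n\right)}{kN\left(N-1\right)}$, and then take the ratio of the separately averaged numerator and denominator of \eq{ratioGeneral} to obtain $k\left(N-2\right)/\left(N-2k\right)$. Your explicit justification of the separate averaging via linearity of the first-order expansion, and your trace-based bookkeeping for $\xi^{\T}\Gamma^{2}\widehat{\xi}$, simply make explicit what the paper leaves implicit.
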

\begin{proof}
Fix a $k$-regular graph with $N$ vertices and, for $0<n<N$, let $\mathbf{u}_{n}$ denote the uniform distribution on the set of configurations, $\xi$, with exactly $n$ cooperators. Since $\mathbf{u}_{n}$ is independent of the graph geometry,
\begin{linenomath}
\begin{align}
\mathbf{u}_{n}\left[\xi\left(x\right)\widehat{\xi}\left(y\right)\right] &= \frac{n\left(N-n\right)}{N\left(N-1\right)}
\end{align}
\end{linenomath}
whenever $x\neq y$. Therefore, by the definitions of $f_{i}$ and $f_{ij}$ in \eq{localFrequencyDefinitions},
\begin{linenomath}
\begin{subequations}\label{eq:uniformMeans}
\begin{align}
\mathbf{u}_{n}\Big[\overline{f_{10}}\left(\xi\right)\Big] &= \frac{n\left(N-n\right)}{N\left(N-1\right)} ; \\
\mathbf{u}_{n}\Big[\overline{f_{1}f_{0}}\left(\xi\right)\Big] &= \frac{\left(k-1\right) n\left(N-n\right)}{kN\left(N-1\right)} .
\end{align}
\end{subequations}
\end{linenomath}
It follows from \eq{ratioGeneral} in the main text that the critical benefit-to-cost ratio for $\mathbf{u}_{n}$ is
\begin{linenomath}
\begin{align}
\left(\frac{b}{c}\right)_{\mathbf{u}_{n}}^{\ast} &= \frac{\mathbf{u}_{n}\left[k\left(N\overline{f_{1}}\left(\xi\right)\overline{f_{0}}\left(\xi\right) -\overline{f_{10}}\left(\xi\right)\right)\right]}{\mathbf{u}_{n}\left[N\overline{f_{1}}\left(\xi\right)\overline{f_{0}}\left(\xi\right) -k\overline{f_{10}}\left(\xi\right) -k\overline{f_{1}f_{0}}\left(\xi\right)\right]} = \frac{k\left(N-2\right)}{N-2k} ,
\end{align}
\end{linenomath}
which is independent of $n$ and coincides with \eq{ratioRandom}. Furthermore, one can use \eq{uniformMeans} to see that the coefficient of $w$ on the right-hand side of \eq{dbTheoremEquation} under the random placement $\mathbf{u}_{n}$ is equal to
\begin{linenomath}
\begin{align}
\frac{n\left(N-n\right)}{2N\left(N-1\right)}\Big[ b\left(N-2k\right) - ck\left(N-2\right) \Big] ,
\end{align}
\end{linenomath}
which is consistent with Theorem 1 in \citep{chen:AAP:2013}.
\end{proof}

\begin{remark}[Neutrality of random configurations]
On an arbitrary finite, connected social network, there is still an expansion in $w$ for fixation probabilities that generalizes \eq{firstOrderExpansion} \citep[see][Theorem 3.8]{chen:AAP:2013}. Moreover, for the donation game and a configuration given randomly by $\mathbf{u}_{n}$, this expansion takes the form
\begin{linenomath}
\begin{align}\label{eq:inefficiencyEquation}
\rho_{\mathbf{u}_{n},\mathbf{C}}\left(w\right) &= \rho_{\mathbf{u}_{n},\mathbf{C}}\left(0\right) + w\frac{n\left(N-n\right)}{N\left(N-1\right)}\left(b\Gamma_{1}-c\Gamma_{2}\right) + O\left(w^{2}\right) ,
\end{align}
\end{linenomath}
where $\Gamma_{1}$ and $\Gamma_{2}$ are constants that are independent of $n$, $b$, and $c$. (See the proof of Lemma 3.1 and the discussion of `Bernoulli transforms' on p. 655-656 in \citep{chen:AAP:2013}. For the linearity of the coefficient of $w$ in $b$ and $c$, see also \citep{tarnita:JTB:2009}.) By \eq{inefficiencyEquation}, the benefit-to-cost ratio for any $n$-random configuration is independent of $n$, so random configurations with more cooperators are neither more nor less conducive to cooperation than those with fewer.
\end{remark}

For a fixed graph, $G$, let $N_{0}$ be the maximum number of vertices that can be chosen in such a way that no two of these vertices are within two steps of one another. We say that a subset of vertices with this property is isolated. If the defectors in a configuration lie on isolated vertices, then we say that defectors are isolated.

\begin{proposition}\label{prop:cooperatorNumber}
If $N>2k$, then cooperation can be favored for a mixed configuration with $n$ cooperators whenever either $1\leqslant n\leqslant N_{0}+1$ or $1\leqslant N-n\leqslant N_{0}+1$.
\end{proposition}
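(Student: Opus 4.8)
The plan is to convert the statement into one about the sign of the denominator appearing in \eq{ratioGeneral}. By \eq{dbTheoremConclusion}, for a fixed configuration $\xi$ with $n$ cooperators the critical ratio $\left(b/c\right)_{\xi}^{\ast}$ is finite --- so that cooperation is favored once $b/c$ exceeds it --- exactly when
\[
\Delta(\xi) := N\overline{f_{1}}(\xi)\,\overline{f_{0}}(\xi) - k\overline{f_{10}}(\xi) - k\overline{f_{1}f_{0}}(\xi) > 0 .
\]
Thus it suffices to produce, for each admissible $n$, a single configuration with $n$ cooperators and $\Delta(\xi) > 0$. First I would record the elementary identity $\overline{f_{1}}(\xi) = n/N$ (and $\overline{f_{0}}(\xi) = (N-n)/N$), which follows by double counting, and then use the adjacency-matrix expressions $\overline{f_{10}}(\xi) = \frac{1}{kN}\xi^{\T}\Gamma\widehat{\xi}$ and $\overline{f_{1}f_{0}}(\xi) = \frac{1}{k^{2}N}\xi^{\T}\Gamma^{2}\widehat{\xi}$ to obtain $N\Delta(\xi) = n(N-n) - e_{CD} - \frac{1}{k}\xi^{\T}\Gamma^{2}\widehat{\xi}$, where $e_{CD} = \xi^{\T}\Gamma\widehat{\xi}$ is the number of cooperator--defector edges.

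The key step is to bound the two subtracted terms. Writing $c(x)$ for the number of cooperator-neighbors of a vertex $x$, a reorganization by the middle vertex of a length-two walk gives $\xi^{\T}\Gamma^{2}\widehat{\xi} = \sum_{x\in V} c(x)\bigl(k - c(x)\bigr) = nk^{2} - \sum_{x\in V} c(x)^{2}$, while $e_{CD} = nk - 2e_{CC}$ with $e_{CC}$ the number of cooperator--cooperator edges. Since each $c(x)$ is a non-negative integer, $c(x)^{2} \geqslant c(x)$, so $\sum_{x} c(x)^{2} \geqslant \sum_{x} c(x) = nk$; hence $\frac{1}{k}\xi^{\T}\Gamma^{2}\widehat{\xi} \leqslant n(k-1)$ and $e_{CD} \leqslant nk$. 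Combining,
\[
N\Delta(\xi) \geqslant n(N-n) - nk - n(k-1) = n\left(N - n - 2k + 1\right) ,
\]
and if moreover $\xi$ carries at least one cooperator--cooperator edge then $e_{CD} \leqslant nk - 2$ sharpens this to $N\Delta(\xi) \geqslant n(N-n-2k+1) + 2$.

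Next I would prove the packing bound $N_{0} \leqslant N - 2k$. Vertices at pairwise distance at least three have disjoint closed unit balls, each of size $k+1$, so $N_{0}(k+1) \leqslant N$; were $N_{0} \geqslant N - 2k + 1$, this would force $N \leqslant 2k + 1 - 1/k < 2k + 1$, contradicting $N > 2k$ (here $k \geqslant 2$). With this in hand the case analysis is immediate. For $1 \leqslant n \leqslant N_{0}$ we have $n \leqslant N - 2k$, so $N - n - 2k + 1 \geqslant 1$ and \emph{every} configuration with $n$ cooperators already satisfies $N\Delta(\xi) \geqslant n > 0$ (an isolated placement, which exists since $n \leqslant N_{0}$, even yields the explicit value $\left(b/c\right)_{\xi}^{\ast} = k(N-n-1)/(N-n-2k+1)$). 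For $n = N_{0} + 1$, either $N_{0} + 1 \leqslant N - 2k$ and the same bound applies, or else $N_{0} = N - 2k$ and $n = N - 2k + 1$; in the latter case $n(N-n-2k+1) = 0$, so I would choose any configuration carrying a cooperator--cooperator edge (possible because $n \geqslant 2$) and invoke the sharpened bound to get $N\Delta(\xi) \geqslant 2 > 0$.

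Finally, the second range $1 \leqslant N - n \leqslant N_{0} + 1$ follows by conjugation: $\Delta$ is invariant under $\xi \mapsto \widehat{\xi}$ (because $\Gamma$ is symmetric, as already noted for $\left(b/c\right)_{\xi}^{\ast}$ in the main text), so a configuration with $N-n$ cooperators and positive denominator yields, upon swapping cooperators and defectors, a configuration with $n$ cooperators and the same positive denominator. I expect the only genuinely delicate point to be the boundary value $n = N_{0} + 1 = N - 2k + 1$: there the generic inequality is an equality at zero, and strict positivity is recovered only by insisting on an adjacency between two cooperators --- the quantitative counterpart of the heuristic that placing a second cooperator next to the first lowers the critical ratio.
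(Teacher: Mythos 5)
Your argument is, in substance, the paper's own proof written in adjacency-matrix language: your bounds $e_{CD}\leqslant nk$ and $\tfrac{1}{k}\xi^{\T}\Gamma^{2}\widehat{\xi}\leqslant n\left(k-1\right)$ are exactly \eq{fijInequalityOne} and \eq{fijInequalityTwo}, your disjoint-closed-balls count is the paper's packing argument giving $N_{0}\leqslant N-2k$, and the conjugation step is identical. However, there is a genuine gap, and it originates in your opening sentence, where you interpret the claim existentially (``it suffices to produce, for each admissible $n$, a single configuration''). The proposition as the paper intends and proves it is universal over configurations --- the main text states it as ``any configuration with $n$ cooperators has a finite critical benefit-to-cost ratio,'' and sharpness of the bounds only makes sense under that reading. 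Your proof is in fact universal for $1\leqslant n\leqslant N_{0}$, and also for $n=N_{0}+1$ when $N_{0}+1\leqslant N-2k$; but in the boundary case $N_{0}=N-2k$, $n=N-2k+1$, you only treat configurations containing a cooperator--cooperator edge. A configuration with $N_{0}+1$ cooperators need not contain such an edge: by definition of $N_{0}$, two of its cooperators must lie within distance two, but they may be non-adjacent, sharing only a common \emph{defector} neighbor. (This occurs, e.g., on the graph of \fig{promoteSuppress}(b), where $N=9$, $k=4$, $N_{0}=1$, and two cooperators can be placed at distance two.) Those configurations are precisely the ones your sharpened bound misses, so the universal statement is not established.

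The fix is one line inside your own framework, and it is the counterpart of the paper's second strictness claim (``if two cooperators are adjacent to the same defector, then \eq{fijInequalityTwo} is strict''), which your proposal never uses. If $\xi$ has no cooperator--cooperator edge but $n>N_{0}$, then two cooperators share a common neighbor, which is necessarily a defector, so some vertex $x$ has $c\left(x\right)\geqslant 2$; since $c^{2}\geqslant c+2$ for integers $c\geqslant 2$, this gives $\sum_{x\in V}c\left(x\right)^{2}\geqslant nk+2$, hence $\tfrac{1}{k}\xi^{\T}\Gamma^{2}\widehat{\xi}\leqslant n\left(k-1\right)-\tfrac{2}{k}$ and $N\Delta\left(\xi\right)\geqslant n\left(N-n-2k+1\right)+\tfrac{2}{k}=\tfrac{2}{k}>0$. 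With that case added, every configuration with $n=N_{0}+1$ cooperators is covered and your proof matches the full strength of the paper's.
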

\begin{proof}
For any configuration, $\xi$, with $n$ cooperators, we have the inequalities
\begin{linenomath}
\begin{subequations}\label{eq:fijInequalities}
\begin{align}
\overline{f_{10}}\left(\xi\right) &\leqslant \frac{n}{N} ; \label{eq:fijInequalityOne} \\
\overline{f_{1}f_{0}}\left(\xi\right) &\leqslant \frac{n\left(k-1\right)}{Nk} . \label{eq:fijInequalityTwo}
\end{align}
\end{subequations}
\end{linenomath}
In \eq{fijInequalities}, equality is obtained by a configuration with $n$ isolated cooperators. Indeed, $\overline{f_{10}}\left(\xi\right)$ and $\overline{f_{1}f_{0}}\left(\xi\right)$ depend on the number of cooperator-defector paths and the number of cooperator-anything-defector paths in $\xi$, respectively, and each such path is defined by either an edge or two incident edges. On the other hand, at least one of these inequalities is strict whenever $\xi$ does not have isolated cooperators: If two cooperators are adjacent to one another, then \eq{fijInequalityOne} is strict; if two cooperators are adjacent to the same defector, then \eq{fijInequalityTwo} is strict. In order to establish the proposition, we need to show that
\begin{linenomath}
\begin{align}\label{eq:aPriori}
N\overline{f_{1}}\left(\xi\right)\overline{f_{0}}\left(\xi\right) -k\overline{f_{10}}\left(\xi\right) -k\overline{f_{1}f_{0}}\left(\xi\right) > 0
\end{align}
\end{linenomath}
since, then, the critical benefit-to-cost ratio of \eq{ratioGeneral} is finite. Moreover, since \eq{aPriori} is invariant under conjugation, it suffices to consider configurations with $n$ cooperators, where $1\leqslant n\leqslant N_{0}+1$. By \eq{fijInequalities},
\begin{linenomath}
\begin{align}
N\overline{f_{1}}\left(\xi\right)\overline{f_{0}}\left(\xi\right) -k\overline{f_{10}}\left(\xi\right) -k\overline{f_{1}f_{0}}\left(\xi\right) &\geqslant \frac{n\left(N-n-2k+1\right)}{N} ,
\end{align}
\end{linenomath}
so it suffices to establish the inequality $N-N_{0}-2k\geqslant 0$.

Suppose, on the other hand, that $N-N_{0}-2k<0$. By the definition of $N_{0}$, we can then find a configuration with $N-2k+1$ isolated cooperators. Since each of these cooperators has $k$ neighboring defectors, and since none of these defectors have more than one cooperator as a neighbor, we have
\begin{linenomath}
\begin{align}
\left(N-2k+1\right)\left(k+1\right) \leqslant N \iff N \leqslant 2k ,
\end{align}
\end{linenomath}
which contradicts the assumption that $N>2k$, as desired.
\end{proof}

\begin{remark}
The proof of Proposition \ref{prop:cooperatorNumber} shows that whenever $N>2k$, in fact $N-2k\geqslant\max_{G}N_{0}$ holds, where $G$ ranges over all $k$-regular graphs on $N$ vertices. This lower bound, $\max_{G}N_{0}$, is sharp, which can be seen from the graph in \fig{promoteSuppress}(b) since this graph has size $9$, is $4$-regular, and satisfies $N_{0}=1$.
\end{remark}

\begin{proposition}\label{prop:reduceBC}
Suppose that $N>2k$. Let $\xi$ and $\xi '$ be configurations with $n$ and $n-1$ cooperators, respectively, such that defectors under both configurations are isolated. Then,
\begin{linenomath}
\begin{align}
\left(\frac{b}{c}\right)_{\xi}^{\ast} < \left(\frac{b}{c}\right)_{\xi '}^{\ast} .
\end{align}
\end{linenomath}
\end{proposition}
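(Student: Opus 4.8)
The plan is to turn the isolation hypothesis into an exact evaluation of the averaged local frequencies appearing in \eq{ratioGeneral}, so that the critical ratio becomes an elementary function of the number of cooperators, and then to check that this function is strictly decreasing on the range of $n$ that the hypotheses force. To compute the frequencies I would exploit the conjugation invariance of $(b/c)^{\ast}_{\xi}$ together with the fact that $\widehat{\xi}$ has isolated \emph{cooperators}, namely the $N-n$ vertices carrying defectors in $\xi$. Applying the equality case of \eq{fijInequalities} recorded in the proof of Proposition~\ref{prop:cooperatorNumber} to $\widehat{\xi}$, whose $N-n$ cooperators are isolated, gives $\overline{f_{10}}(\widehat{\xi})=(N-n)/N$ and $\overline{f_{1}f_{0}}(\widehat{\xi})=(N-n)(k-1)/(kN)$, while $N\overline{f_{1}}(\widehat{\xi})\overline{f_{0}}(\widehat{\xi})=n(N-n)/N$. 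Substituting these into \eq{ratioGeneral} and cancelling the common factor $(N-n)/N$ from the numerator and denominator, I expect to reach, by conjugation invariance, the closed form $(b/c)^{\ast}_{\xi}=(b/c)^{\ast}_{\widehat{\xi}}=k(n-1)/(n-2k+1)$, valid whenever $n-2k+1>0$; the identical calculation for $\xi'$ produces $(b/c)^{\ast}_{\xi'}=k(n-2)/(n-2k)$. (Equivalently, one can evaluate $\overline{f_{10}}(\xi)$ and $\overline{f_{1}f_{0}}(\xi)$ directly, since isolation forces every neighbor of a defector to be a cooperator and forbids any cooperator from being adjacent to two defectors.)

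I would then rewrite the closed form as $(b/c)^{\ast}_{\xi}=k+2k(k-1)/(n-2k+1)$. Since $k\geqslant 2$, the numerator $2k(k-1)$ is strictly positive, so this expression is strictly decreasing in $n$ on the set where $n-2k+1>0$. Hence $(b/c)^{\ast}_{\xi}<(b/c)^{\ast}_{\xi'}$ will follow at once, as long as both denominators are positive.

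Securing this positivity is the crux, and it is where $N>2k$ is used. The existence of $\xi'$, which carries $N-n+1$ isolated defectors, already constrains $n$: each defector together with its $k$ neighbors occupies $k+1$ vertices, and isolation makes all $(N-n+1)(k+1)$ of these vertices distinct, so $(N-n+1)(k+1)\leqslant N$. This rearranges to $n-1\geqslant Nk/(k+1)$, and from $N\geqslant 2k+1$ one has $Nk\geqslant 2k^{2}+k>(2k-1)(k+1)$, whence $n-1>2k-1$ and therefore $n\geqslant 2k+1$. Consequently $n-2k+1\geqslant 2$ and $(n-1)-2k+1=n-2k\geqslant 1$ are both positive, both critical ratios are finite, and the monotonicity above gives the claim. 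I expect this packing count — which mirrors the one at the end of the proof of Proposition~\ref{prop:cooperatorNumber} — to be the only genuinely delicate step, precisely because it excludes the degenerate case of infinite critical ratios and places us in the regime where the closed form is strictly decreasing.
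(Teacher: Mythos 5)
Your proposal is correct and follows essentially the same route as the paper: isolation of defectors yields the closed forms $\left(\tfrac{b}{c}\right)_{\xi}^{\ast}=\tfrac{k(n-1)}{n-2k+1}$ and $\left(\tfrac{b}{c}\right)_{\xi'}^{\ast}=\tfrac{k(n-2)}{n-2k}$, and the claim follows by comparing these two expressions. The one place you go beyond the paper's very terse proof is the packing count giving $n\geqslant 2k+1$, hence positivity of both denominators; the paper simply asserts the comparison ``follows at once since $k\geqslant 2$'' without making this positivity explicit, so your extra step is a legitimate tightening of the same argument rather than a different approach.
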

\begin{proof}
Since the defectors in both $\xi$ and $\xi '$ are isolated, we have
\begin{linenomath}
\begin{subequations}
\begin{align}
\left(\frac{b}{c}\right)_{\xi}^{\ast} &= \frac{k\left(n-1\right)}{n-2k+1} ; \\
\left(\frac{b}{c}\right)_{\xi '}^{\ast} &= \frac{k\left(n-2\right)}{n-2k} ,
\end{align}
\end{subequations}
\end{linenomath}
and it follows at once that $\left(\frac{b}{c}\right)_{\xi}^{\ast} < \left(\frac{b}{c}\right)_{\xi '}^{\ast}$ since $k\geqslant 2$, as desired.
\end{proof}

As a consequence of Proposition \ref{prop:reduceBC}, we see that among the configurations with an isolated strategy (cooperators or defectors), the minimum critical benefit-to-cost ratio is attained by any configuration with just a single cooperator.

\begin{proposition}
For a $k$-regular graph with $N>2k$, we have the following:
\begin{enumerate}

\item[(i)] if $N_{0}\geqslant 2$, then, for any $n$ with $2\leqslant n\leqslant N_{0}$, there exists a configuration with $n$ cooperators whose critical benefit-to-cost ratio is smaller than that of a random configuration;

\item[(ii)] for any configuration with exactly two cooperators, such that, furthermore, these two cooperators are neighbors, cooperation can be favored by weak selection. Moreover, the critical benefit-to-cost ratio for this configuration is smaller than that of a random configuration.

\end{enumerate}
\end{proposition}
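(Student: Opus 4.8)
The plan is to work throughout from the explicit critical ratio of \thm{dbTheorem}. Writing $S=N\overline{f_{1}}(\xi)\overline{f_{0}}(\xi)=n(N-n)/N$ (which depends only on the number $n$ of cooperators), $p=\overline{f_{10}}(\xi)$ and $q=\overline{f_{1}f_{0}}(\xi)$, the critical ratio of \eq{dbTheoremConclusion} is $R(p,q)=k(S-p)/(S-kp-kq)$, and the random-configuration value from \eq{ratioRandom} is $R_{\mathrm{rand}}=k(N-2)/(N-2k)$. A short calculation shows that, on the region where the denominator is positive, $R$ is strictly increasing in each of $p$ and $q$; equivalently, $R(p,q)<R_{\mathrm{rand}}$ is the single linear inequality $q<\tfrac{k-1}{k(N-2)}\bigl(2S-pN\bigr)$. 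Two identities drive the estimates: if $T$ is the cooperator set, then $p=e(T,T^{c})/(kN)$ with $e(T,T^{c})=kn-2e(T)$ (the edge boundary and internal edge count of $T$), and $q=\tfrac{n}{N}-\tfrac{1}{k^{2}N}\sum_{y}\nu(y)^{2}$, where $\nu(y)$ is the number of cooperator neighbors of $y$ and $\sum_{y}\nu(y)=kn$. I would also record at the outset that $N_{0}\leqslant N/(k+1)$, since the closed unit balls around vertices at pairwise distance at least three are disjoint and each has $k+1$ elements.

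For (ii) I would argue by direct computation, as the instance $n=2$ with $T$ a single edge $\{u,v\}$. Here $p=2(k-1)/(kN)$ and, writing $t$ for the number of triangles through $uv$, $q=2(k-1)/(kN)-2t/(k^{2}N)$. Substituting into \thm{dbTheorem} gives denominator $\tfrac{2}{N}\bigl[(N-2k)+t/k\bigr]$, which is positive because $N>2k$; hence the critical ratio is finite and cooperation can be favored. Comparing with $R_{\mathrm{rand}}$ by cross-multiplication, the difference of the two cross-products equals $(k-1)(N-2k)+(N-2)t$, which is strictly positive for $k\geqslant 2$ and $N>2k$; hence $R<R_{\mathrm{rand}}$, strictly and uniformly in $t$.

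For (i), given $2\leqslant n\leqslant N_{0}$, I would take $T$ to be any connected set of $n$ vertices (one exists because $G$ is connected), place cooperators on $T$, and note that the denominator of \eq{ratioGeneral} is positive by Proposition \ref{prop:cooperatorNumber} since $n\leqslant N_{0}+1$, so the ratio is finite. Connectivity gives $e(T)\geqslant n-1$, hence $p\leqslant\bigl((k-2)n+2\bigr)/(kN)$; and since every vertex of $T$ has a cooperator neighbor while the external boundary has at most $e(T,T^{c})\leqslant(k-2)n+2$ vertices, the set $\{y:\nu(y)>0\}$ has at most $(k-1)n+2$ elements, so Cauchy--Schwarz applied to $\sum_{y}\nu(y)=kn$ yields $\sum_{y}\nu(y)^{2}\geqslant (kn)^{2}/\bigl((k-1)n+2\bigr)$ and therefore $q\leqslant n\bigl((k-2)n+2\bigr)/\bigl(N[(k-1)n+2]\bigr)$. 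By the monotonicity noted above it then suffices to verify $q<\tfrac{k-1}{k(N-2)}(2S-pN)$ at these extremal values of $p$ and $q$; clearing denominators reduces this to a polynomial inequality in $n,k,N$ which I would confirm holds throughout the range using the constraint $N\geqslant(k+1)n$ (equivalently $n\leqslant N_{0}\leqslant N/(k+1)$) and $k\geqslant 2$. The final sentence of (ii) is the case $n=2$ of this argument, consistent with the exact computation above.

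I expect the main obstacle to be the control of the two-step quantity $\overline{f_{1}f_{0}}$ for a multi-cooperator cluster on an \emph{arbitrary} regular graph, where triangles and shared neighbors make an exact evaluation graph-dependent. The key subtlety is that one cannot simply bound $\overline{f_{10}}$ and $\overline{f_{1}f_{0}}$ separately below their random values --- already for $n=2$ on a cycle, $\overline{f_{1}f_{0}}$ exceeds its random value even though the critical ratio drops --- so the comparison must be carried out jointly on $R(p,q)$, which is exactly what the single linear inequality in $p$ and $q$ together with the extremal estimates of the previous paragraph accomplish. Checking that the resulting polynomial inequality remains valid at the boundary $n=N_{0}$, where the bounds are nearly tight, is the delicate point.
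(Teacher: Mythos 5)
Your part (ii) is, up to presentation, the paper's own proof: the paper also places the two cooperators on an edge, computes $\overline{f_{10}}\left(\xi\right) =\frac{2k-2}{Nk}$ and $\overline{f_{1}f_{0}}\left(\xi\right) =\frac{2k\left(k-1\right)-2T\left(x,y\right)}{Nk^{2}}$, and compares with \eq{ratioRandom}. Your cross-multiplication, giving the difference $\left(k-1\right)\left(N-2k\right) +\left(N-2\right) t>0$, is exactly right; in fact your numerator $k\left(N-3+\frac{1}{k}\right)$ is the correct one (the paper's displayed $k\left(N-3-\frac{1}{k}\right)$ appears to carry a sign slip, harmless for the conclusion).

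Part (i) is where you genuinely diverge. The paper's argument is short and non-constructive: by Proposition~\ref{prop:reduceBC} and conjugation invariance, any configuration with $n\geqslant 2$ isolated cooperators (which exists precisely because $n\leqslant N_{0}$) has critical ratio strictly \emph{above} the single-cooperator value, which equals the random value \eq{ratioRandom}; since the random ratio is a ratio of expectations, $\mathbf{u}_{n}\left[\mathbf{N}_{\xi}\right] /\mathbf{u}_{n}\left[\mathbf{D}_{\xi}\right]$, an averaging argument then forces \emph{some} configuration with $n$ cooperators to lie strictly below. You instead exhibit an explicit family---connected clusters of $n$ cooperators---and prove the inequality via the edge-count identity for $\overline{f_{10}}$, a Cauchy--Schwarz bound on $\overline{f_{1}f_{0}}$, and a reduction to a polynomial inequality. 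Your route is longer, but it proves more: every connected cluster with $2\leqslant n\leqslant N_{0}$ beats the random ratio, not merely some unspecified configuration, which is more informative for the strategic-placement theme of the paper. The paper's route buys brevity and avoids all explicit estimates, at the price of pure existence.

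The one real gap in your write-up is that the final polynomial inequality is asserted (``I would confirm'') rather than proved, and it is the crux. It does hold; here is the missing verification. With $A=\left(k-2\right) n+2$ and $B=\left(k-1\right) n+2$, clearing denominators in $q^{\ast}<\frac{k-1}{k\left(N-2\right)}\left(2S-p^{\ast}N\right)$ reduces the claim to $F\left(N\right) :=\left(k-1\right) B\left[2kn\left(N-n\right) -AN\right] -k^{2}nA\left(N-2\right) >0$. The function $F$ is linear in $N$ with slope $\left(2k^{2}-3k+2\right) n^{2}-2\left(k^{2}-3k+3\right) n-4\left(k-1\right)\geqslant 2\left(k-1\right)\left[n\left(k+1\right) -2\right] >0$ for $n,k\geqslant 2$, so it suffices to check the smallest admissible $N$, namely $N=\left(k+1\right) n$ (your constraint $n\leqslant N_{0}\leqslant N/\left(k+1\right)$). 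There one finds $F\left(\left(k+1\right) n\right) =n\,G\left(n,k\right)$ with $G\left(n,k\right) =\left(3k^{2}-3k+2\right) n^{2}-2\left(2k^{2}-2k+3\right) n+4$; this quadratic in $n$ has its vertex at $n=\frac{2k^{2}-2k+3}{3k^{2}-3k+2}<1$ for $k\geqslant 2$, and $G\left(2,k\right) =4k\left(k-1\right) >0$, so $G>0$ throughout $n\geqslant 2$. With this inserted, your proof is complete and correct.
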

\begin{proof}
By Proposition \ref{prop:reduceBC}, the critical benefit-to-cost ratio for any configuration with $n\geqslant 2$ isolated cooperators is greater than the critical benefit-to-cost ratio for any configuration with just a single cooperator. Recall now that this ratio for one cooperator is the same as the ratio for $n$ randomly-placed cooperators \citep{chen:AAP:2013}. By \eq{dbProofEquation}, the critical benefit-to-cost ratio for $\xi$ is of the form $\left(\frac{b}{c}\right)_{\xi}^{\ast}=\frac{\mathbf{N}_{\xi}}{\mathbf{D}_{\xi}}$ for some voter-model expectations, $\mathbf{N}_{\xi}$ and $\mathbf{D}_{\xi}$. Therefore, by a simple averaging argument, we see that for each $n$ with $2\leqslant n\leqslant N_{0}$, there must exist a configuration with $n$ cooperators whose critical ratio is smaller than that of a random configuration, \eq{ratioRandom}, which completes the proof of part (i) of the proposition.

Let $\xi$ be a configuration with two cooperators placed at adjacent vertices, $x$ and $y$. If $T\left(x,y\right)$ is the number of vertices adjacent to both $x$ and $y$, then straightforward calculations give
\begin{linenomath}
\begin{subequations}
\begin{align}
\overline{f_{10}}\left(\xi\right) &= \frac{2k-2}{Nk} ; \\
\overline{f_{1}f_{0}}\left(\xi\right) &= \frac{2k\left(k-1\right) -2T\left(x,y\right)}{Nk^{2}} .
\end{align}
\end{subequations}
\end{linenomath}
It then follows from the definition of the critical benefit-to-cost ratio that
\begin{linenomath}
\begin{align}
\left(\frac{b}{c}\right)_{\xi}^{\ast} &= \frac{k\left(N-3-\frac{1}{k}\right)}{N-2k+\frac{1}{k}T\left(x,y\right)} ,
\end{align}
\end{linenomath}
which is smaller than the ratio for random placement, \eq{ratioRandom}, because $N>2k$, which gives (ii).
\end{proof}

\subsection{Examples}

In Figs. \ref{si:fig:one} and \ref{si:fig:two}, we give examples of the relationship between the configuration and the critical benefit-to-cost ratio on three small graphs.

\begin{figure}[t]
\centering
\includegraphics[scale=1.0]{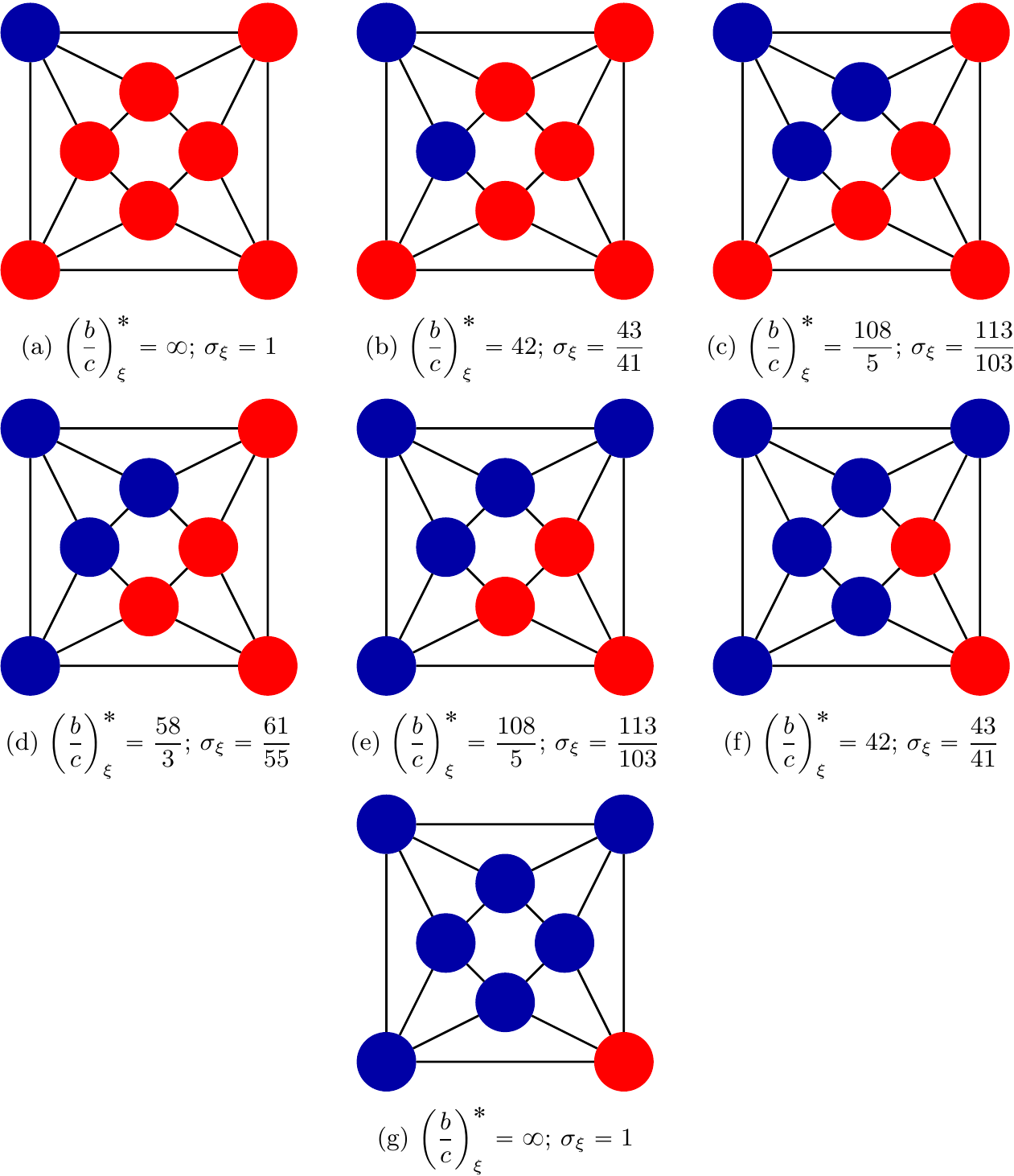}
\caption{Cooperator-defector configurations on a $4$-regular graph with $8$ vertices and diameter $2$. Starting from one cooperator in (a), a single cooperator is added in each subsequent panel. Although cooperation can never be favored by selection when starting from a state with a single mutant (a) or a single defector (g), it can be favored in the other states, (b)-(f), since the critical benefit-to-cost ratios are all finite in those panels.\label{si:fig:one}}
\end{figure}


\begin{figure}[t]
\centering
\includegraphics[scale=1.0]{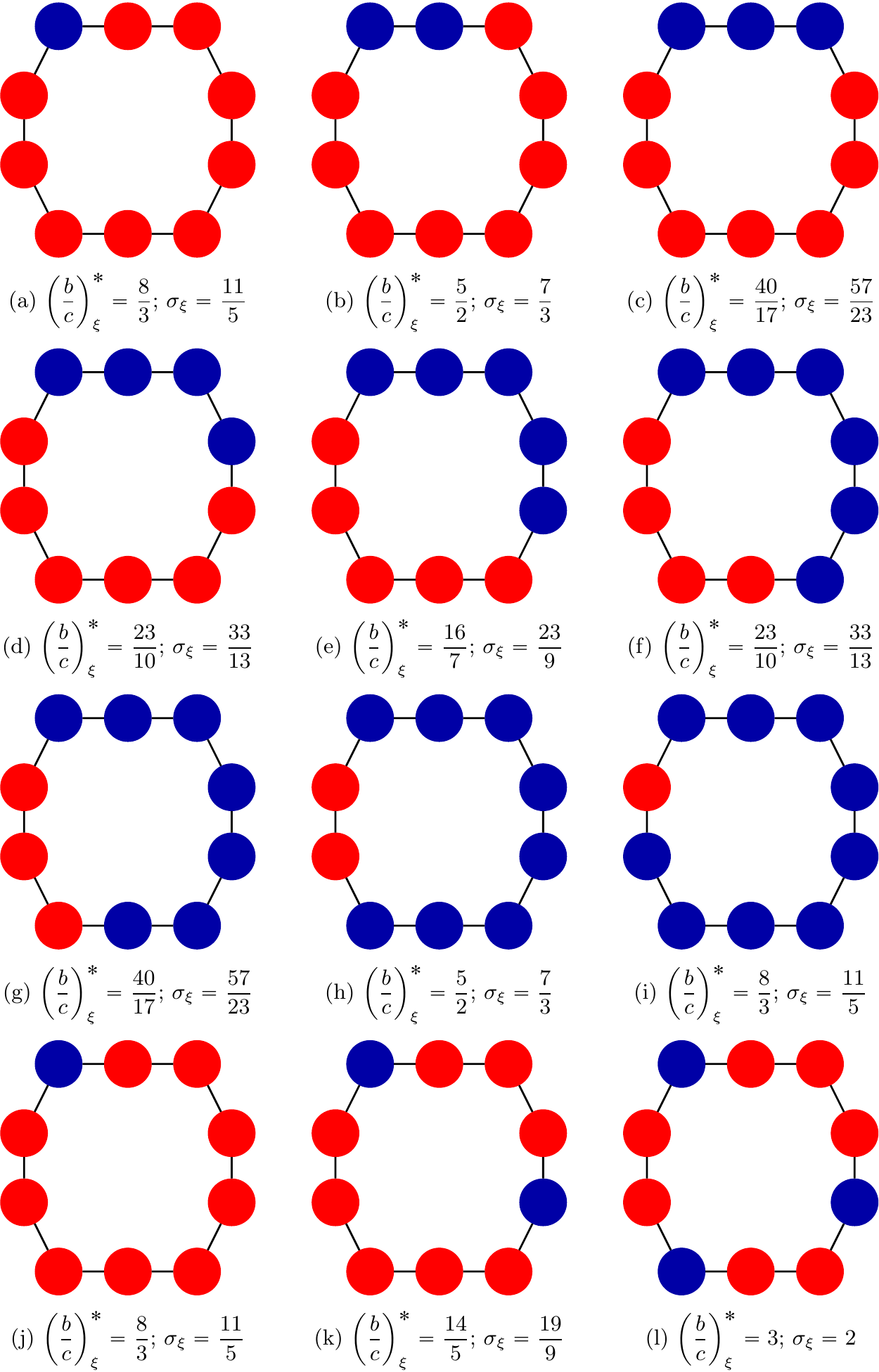}
\caption{The effects of adding cooperators to the initial condition on a cycle with $10$ vertices. In panels (a)-(i), cooperators are added sequentially, with each new cooperator neighboring a cooperator in the previous configuration. These panels clearly demonstrate that a configuration and its conjugate have the same critical ratio and structure coefficient. Panels (j)-(l) show that when cooperators are added in a different order, the critical ratios can increase rather than decrease. The configurations of (j)-(l) each have isolated cooperators.\label{si:fig:two}}
\end{figure}

\section*{Acknowledgments}
Support from the Center of Mathematical Sciences at Harvard University (Y.-T.C.), the John Templeton Foundation (Y.-T.C. and M.A.N.), a grant from B Wu and Eric Larson (Y.-T.C. and M.A.N.), and the Natural Sciences and Engineering Research Council of Canada (A.M.), and the Office of Naval Research (A.M. and M.A.N.) is gratefully acknowledged. Part of this research was done while Y.-T.C. visited NCTS Taipei.

\end{document}